\newcommand{\be}{\begin{equation}}
\newcommand{\ee}{\end{equation}}
\newcommand{\bea}{\begin{eqnarray}}
\newcommand{\eea}{\end{eqnarray}}
\newcommand{\bse}{\begin{subequations}}
\newcommand{\ese}{\end{subequations}}
\theoremstyle{plain}
\newtheorem{lem}{Lemma}[section]
\newcommand{\prlsection}[1]{{\em {#1}.---~}}
\begin{document}
\title{Super-Extensive Charging Power in the Absence of Global Operations}

\author{Anupam$^\S$}
\email{asarkar@imsc.res.in}

\author{Sheryl Mathew$^\S$}
\email{smathew@imsc.res.in}

\author{Sibasish Ghosh}
\email{sibasish@imsc.res.in}

\thanks{$^\S$These authors contributed equally to this work.}

\affiliation{Optics and Quantum Information Group, The Institute of Mathematical Sciences,
CIT Campus, Taramani, Chennai 600113, India.}
\affiliation{Homi Bhabha National Institute, Training School Complex, Anushakti Nagar, Mumbai 400094, India.}

\begin{abstract}
Quantum batteries have emerged as a platform for investigating whether quantum effects can accelerate energy storage beyond classical limits. Although a variety of charging schemes have reported signatures of quantum advantage, the fundamental physical requirements for achieving superextensive charging power remain insufficiently understood. Here, we show that, in addition to Hamiltonian locality, a key structural property, \emph{$g$-extensiveness}, quantifying the distribution of interaction energy across lattice sites places a fundamental bound on charging performance in spin-lattice models. We prove that superextensive power scaling is possible only when the interaction-energy distribution becomes increasingly nonuniform, with the maximal local weight growing with system size. This criterion explains why many previously studied protocols fail to exhibit superextensive power, even when the Hamiltonians involve large participation numbers. We further demonstrate that this condition is realized in an experimentally relevant interacting model, where, despite fixed interaction order, the charging power scales superextensively. Our results establish $g$-extensiveness as a necessary resource for quantum advantage in direct-charging protocols and provide a systematic framework for identifying and engineering physically feasible quantum batteries capable of outperforming classical counterparts in charging power.
\end{abstract}

\maketitle


\prlsection{Introduction} Since the advent of the concept of quantum battery \cite{Campaioli2024Colloquium, Alicki2013EntanglementBoost, Quach2023QuantumBatteries, Binder2015Quantacell, Bhattacharjee2021Quantum, Camposeo2025Materials}, a central question has been whether quantum effects can enable superextensive power scaling beyond classical limits. Although the distinction between classical and quantum batteries remains debated \cite{ Andolina2019QuantumVsClassical}, it is widely accepted that charging processes generating entanglement or coherence \cite{Ferraro2018HighPower, Le2018SpinChain, Campaioli2024Colloquium, Andolina2025Genuine, Campaioli2024Colloquium, Zhang2023Dicke, Caravelli2021EnergyStorage, JuliaFarre2020Bounds}, thereby enhancing power, can be regarded as genuinely quantum. Recent studies have explored diverse battery–charger architectures in pursuit of such quantum advantage, which can be broadly classified into two paradigms. In direct-charging protocols \cite{Campaioli2017Enhancing, Rossini2020Quantum, Le2018SpinChain, Grazi2024Controlling, Mondal2022PeriodicallyDriven, Divi2025SYKRandomWalk, Francica2024SYK, Ghosh2020Enhancement, Zhang2019Harmonic}, the battery is initialized in the ground state of its Hamiltonian and then subjected to a sudden quench that drives the system toward excited states. In contrast, in charger-mediated protocols \cite{Ferraro2018HighPower, Farina2019ChargerMediated, Dou2022ExtendedDicke, Andolina2018ExactlySolvable, Crescente2020Ultrafast, Salvia2023RepeatedInteractions, Seah2021Collisional, Dou2022CavityHeisenberg, Yang2024TavisCummings, Crescente2022Mediator}, the battery interacts with an auxiliary quantum system at higher energy, enabling energy transfer until the battery becomes charged.

Only a few models are known to exhibit superextensive power scaling. Notably, atomic–photonic hybrid systems such as the Dicke model and its variants display superextensive behavior when thermodynamic consistency is relaxed and the battery–charger coupling lies in the ultrastrong regime \cite{Ferraro2018HighPower, Crescente2020Ultrafast, Delmonte2021TwoPhoton, Yang2024ThreeLevel}. In this limit, the enhanced coupling mediates energy transfer that scales superextensively with system size. The Dicke model also serves as one of the few experimentally realized platforms demonstrating such enhancement \cite{Quach2022Superabsorption, Maillette2023EnergyTransfers, Hymas2025RoomTemp, Hu2022Superconducting, Qu2023Catalyst, Joshi2022Experimental}. In contrast, within the direct-charging paradigm, the only model known to produce superextensive power is the fermionic Sachdev–Ye–Kitaev (SYK) model. The spin and bosonic counterparts of the SYK model, however, fail to exhibit this scaling \cite{Rossini2020Quantum}. The enhancement in the fermionic case has been attributed to the nonlocal nature of the Jordan–Wigner transformation, which maps the fermionic SYK model to its spin representation \cite{Campaioli2024Colloquium}. It is in this transformed representation that superextensive charging power is obtained.

In this Letter, we show that, as a direct consequence of the $g$-extensivity condition \cite{Arad2016Connecting}, superextensive power scaling cannot occur if $g$ does not scale with system size, even when the charging Hamiltonian is highly nonlocal. Moreover, realizing interactions whose locality scales with the system size is exceedingly difficult in any experimental setting. This naturally leads to a key question: can one achieve superextensive power using a Hamiltonian of constant locality, while keeping both the battery and charger Hamiltonians energy-extensive? In this work, we answer this question affirmatively. We demonstrate that within an experimentally feasible model featuring constant locality and a simple quench protocol it is, in principle, possible to attain superextensive power scaling.

\begin{figure*}[t]
    \centering
    \includegraphics[height=0.4\textwidth]{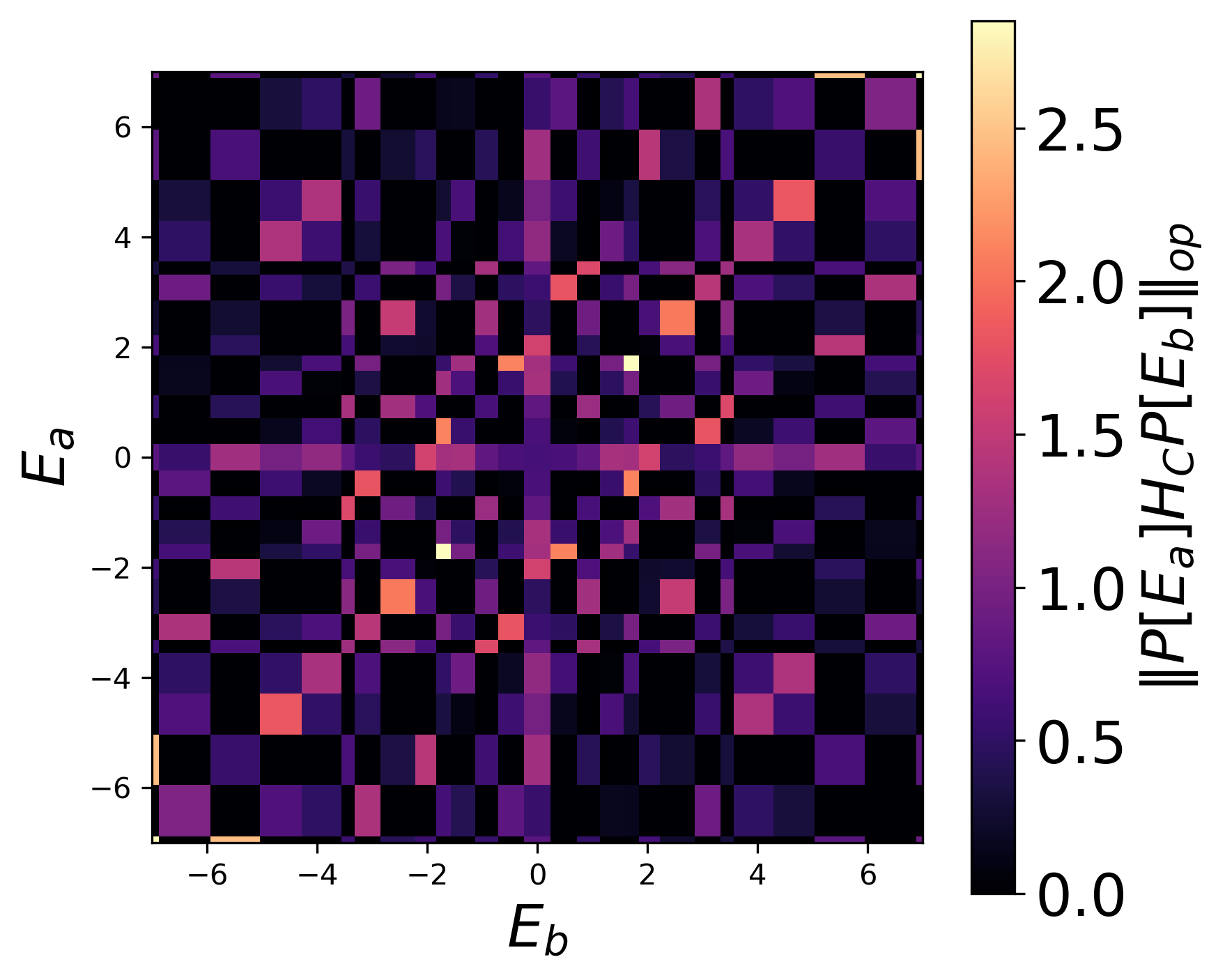}
    \hfill
    \includegraphics[height=0.4\textwidth]{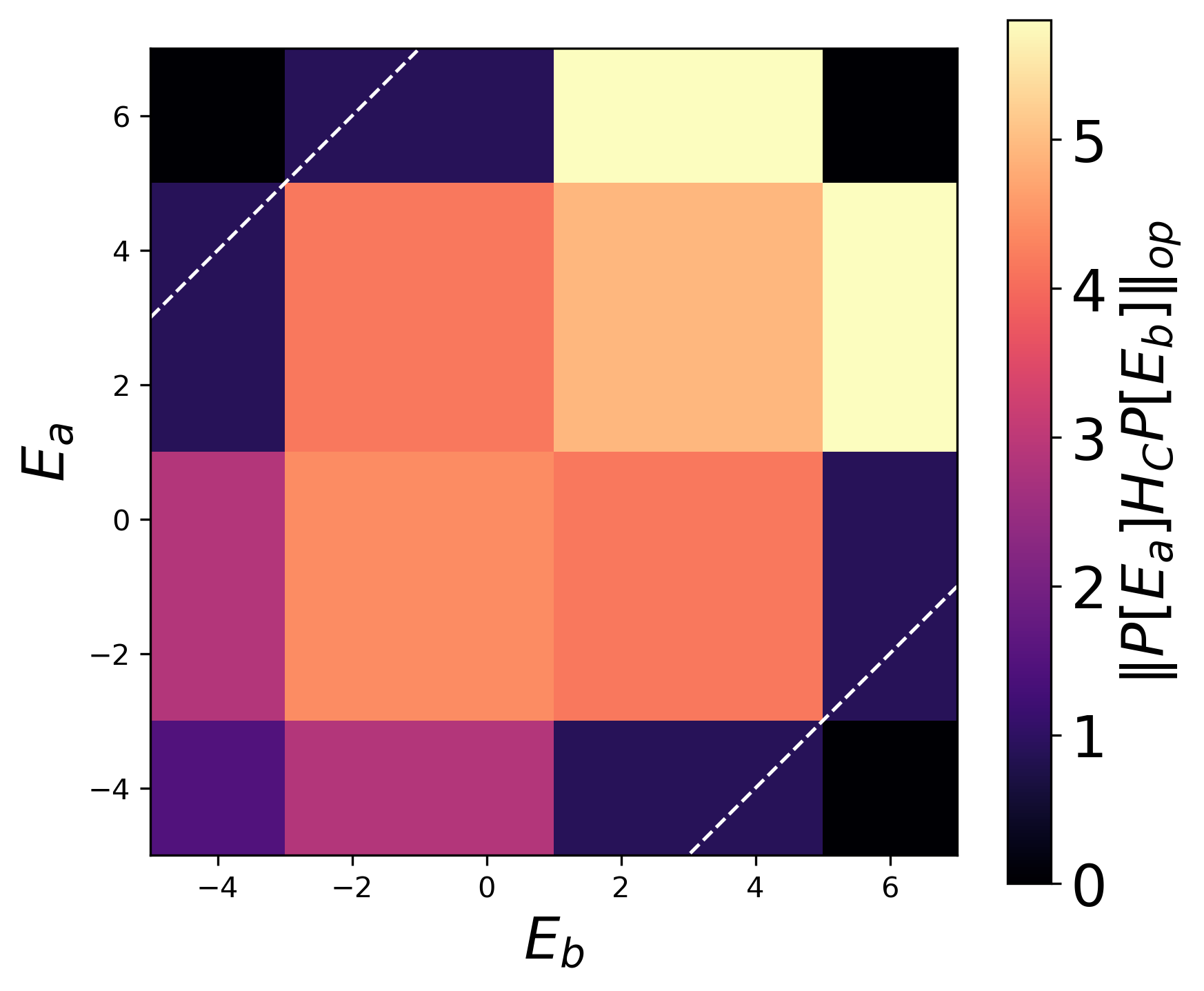}

    \vspace{2pt}
    \makebox[0.48\textwidth][c]{\textbf{(a)}}%
    \hfill
    \makebox[0.48\textwidth][c]{\textbf{(b)}}

    \caption{
        \textbf{Energy Transitions.}
        (a) Heatmap of absolute values of transition matrix elements 
        $\| P[E_a] H_C P[E_b] \|$ for an Ising charger in the diagonal basis 
        of a modified central-spin battery. Degenerate levels are grouped as projections onto energy subspaces of the battery Hilbert space.
        (b) Same as (a), but with the battery and charger Hamiltonians interchanged. Here, total number of sites is seven.
    }
    \label{fig:aklh_heatmap}
\end{figure*}

\prlsection{Interaction Order and $g$-Extensivity of the Hamiltonian}

Here, we consider lattice models \cite{Naaijkens2017QuantumSpin} with local dimension $d$, and total number of sites $N$. The sites are collectively denoted as a set $\Lambda$. The battery and charger Hamiltonians defined on them are $q$-local and $k$-local, respectively. A $k$-local Hamiltonian can be expressed as a sum of local terms \cite{Hastings2010Locality},
\begin{equation}
H = \sum_{X} h_{X},
\end{equation}
where $X \subseteq \Lambda$ denotes subsets of the lattice sites, and $h_{X} = 0$ whenever $|X| > k$. This general structure encompasses both short-range and long-range Hamiltonians with interaction order $k$, since $h_{X}$ need not involve only neighboring lattice sites.  

To quantify the strength of local contributions, we define for any Hamiltonian the parameter $g$ as
\begin{equation}
\max_{i} g_{i} = g, \qquad g_{i} = \sum_{X : X \ni i} \|h_{X}\|.
\end{equation}
This parameter measures the maximum cumulative strength of all local terms acting on a given site and is referred to as the \textit{$g$-extensivity} of the Hamiltonian \cite{Arad2016Connecting}.

As an illustrative example, consider the transverse-field Ising model (TFIM),
\begin{equation}
H = J \sum_{\langle i, j \rangle} \sigma_{i}^{z} \sigma_{j}^{z} + \alpha \sum_{j} \sigma_{j}^{x},
\end{equation}
where $J$ is the interaction strength, $\alpha$ the transverse field, and $\langle i,j \rangle$ denotes nearest-neighbor pairs.  
All the sites $i$ have equal values of $g_{i}$. For site $1$, the $g_{1}$ evaluates to
\begin{align}
g_{1} &= \sum_{X : X \ni 1} \|h_{X}\| 
= \|J \sigma_{1}^{z} \sigma_{2}^{z}\| 
+ \|J \sigma_{1}^{z} \sigma_{N}^{z}\|
+ \|\alpha \sigma_{1}^{x}\| \nonumber\\
&= 2|J| + |\alpha|,
\end{align}
where we have assumed periodic boundary conditions.

\prlsection{Role of Interaction Order} The concept of a quantum battery gained prominence from the idea that charging protocols exploiting quantum resources; particularly entanglement may yield speedups over classical strategies. The underlying physical intuition is that restricting the evolution to the separable subspace severely limits the accessible dynamical pathways, whereas typical states in a generic many-body Hilbert space are highly entangled~\cite{Hayden2006GenericEntanglement, Poulin2011TimeDependent, Hovhannisyan2013WorkExtraction}. Entangling operations can therefore open additional ``routes'' in Hilbert space, enabling faster transitions from low-energy to high-energy states. 

Gyhm \textit{et al.}~\cite{Gyhm2022Quantum} formalized this intuition by studying many-body batteries and identifying constraints on charging power that depend explicitly on the locality, or interaction order, of the charging Hamiltonian. Their results indicate that only \emph{global} operations i.e., interaction order $k$ scaling as $k \sim \mathcal{O}(N)$-- can provide extensive quantum advantage over protocols without entanglement. This aligns with the expectation that highly nonlocal Hamiltonians can generate multipartite entanglement~\cite{Shi2025Multipartite} during charging.

It has further been argued that global operations are necessary to obtain superextensive power. Supporting numerical evidence appeared earlier in Ref.~\cite{Rossini2020Quantum}, where a fermionic complex SYK (cSYK) model was shown to exhibit superextensive charging power. When mapped to spins via the Jordan--Wigner transformation, the resulting Hamiltonian contains highly nonlocal string operators, and it is widely believed \cite{Campaioli2024Colloquium} that these nonlocal terms are responsible for the observed enhancement.

In the next section, we demonstrate that these limitations arise primarily from the assumption that either the battery or the charger Hamiltonian is noninteracting. We show how this restriction can be lifted by allowing both Hamiltonians to be interacting while permitting large values of $g$, yet still preserving energy extensivity for each Hamiltonian.

\begin{figure}[t]
    \centering
    \includegraphics[width=\columnwidth]{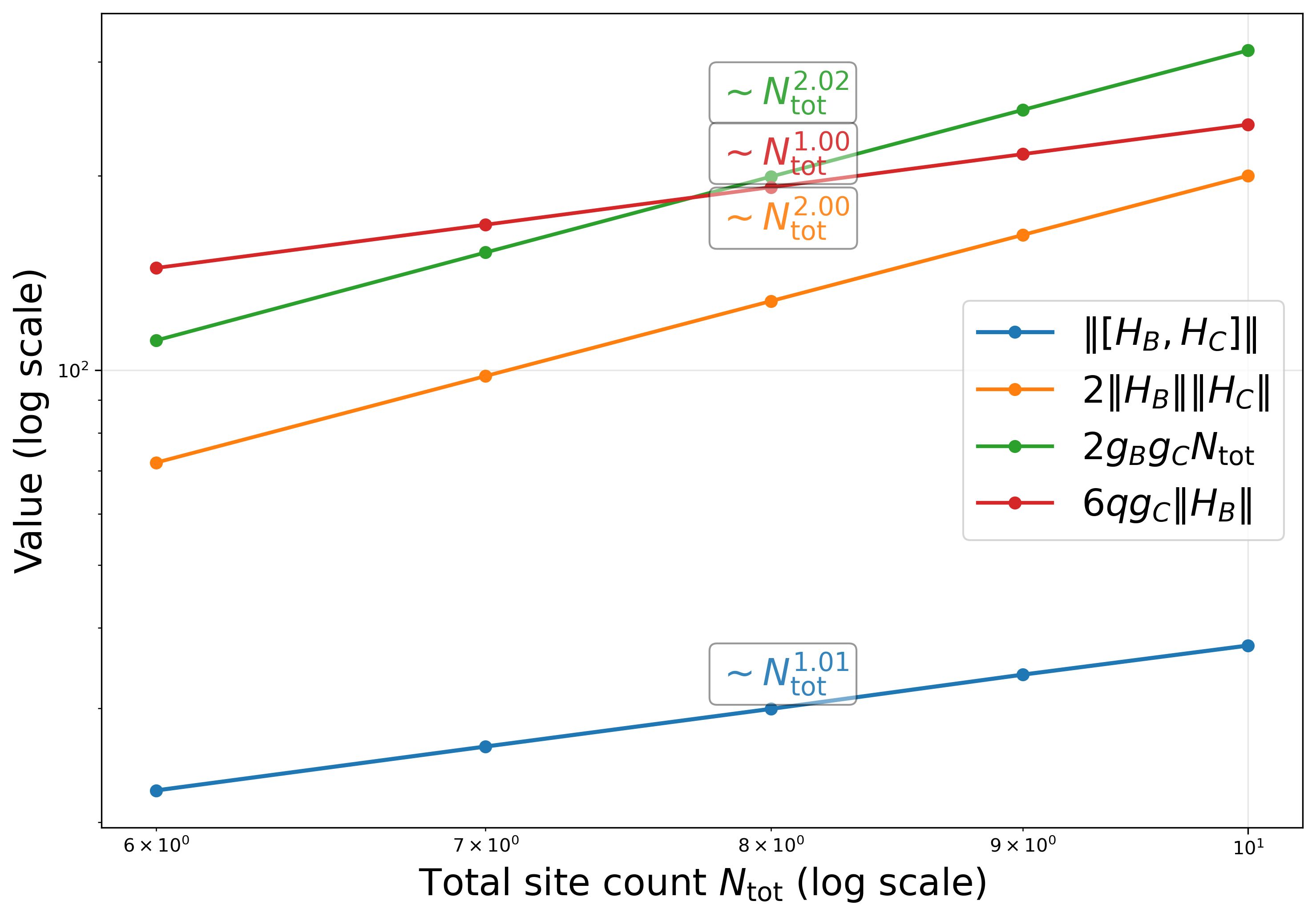}
    \caption{
        Scaling with system size of the commutator operator norm and bounds for an Ising battery and MCS charger
    }
    \label{fig:comm_bounds_Ising_MCS}
\end{figure}

\prlsection{Role of the Battery Hamiltonian}
In the direct charging protocol, the maximum instantaneous power is bounded by the operator norm of the commutator between the battery and charger Hamiltonians,
\begin{equation}
|P(t)| \le \|[H_{B},H_{C}]\|,
\end{equation}
where $\|\cdot\|$ denotes the operator norm. Although the dynamics are generated solely by the charger Hamiltonian $H_{C}$, the bound depends explicitly on $[H_{B},H_{C}]$, showing that the battery Hamiltonian plays an active role in determining the achievable power. This highlights the importance of a joint design of $H_{B}$ and $H_{C}$, in sharp contrast with scenarios where either Hamiltonian is taken to be non-interacting \cite{Campaioli2017Enhancing, Le2018SpinChain, Gyhm2022Quantum}.

Much of the existing literature fixes one of the Hamiltonians, typically the battery or the charger to be non-interacting in order to compare meaningfully with parallel charging scenario. Here we relax this assumption, requiring only that both Hamiltonians are \emph{energy-extensive}, i.e.,
\begin{equation}
\|H_{B}\|,\, \|H_{C}\| \sim \mathcal{O}(N),
\end{equation}
with $N$ the system size. Within this broader framework, we aim to identify explicit, physically motivated models that can exhibit superextensive power scaling.

A key tool in our analysis is the \emph{AKLH lemma}~\cite{Arad2016Connecting}, which bounds matrix elements of local operators between energy subspaces of a $k$-local, $g$-extensive Hamiltonian that are well separated in energy. For an operator $h_{X}$ supported on $X\subseteq\Lambda=\{1,\dots,N\}$, and projectors $\Pi_{<E}$ and $\Pi_{>E'}$ onto the corresponding low- and high-energy subspaces, the lemma asserts
\begin{equation}
\|\Pi_{>E'} h_{X}\Pi_{<E}\|
\le \|h_{X}\|\,
\exp\!\left[-\frac{1}{2gk}\left(E'-E-2R\right)\right],
\end{equation}
where $R\le g|X|$.

As an immediate consequence, if the battery Hamiltonian is noninteracting (i.e., $k_{B}=1$ and $g_{B}=\mathcal{O}(1)$), the commutator is bounded by
\begin{equation}
\|[H_{B},H_{C}]\|\le 6 g_{B} k_{C}\,\|H_{C}\|.
\end{equation}
Exchanging $B$ and $C$ yields the symmetric bound
\begin{equation}
\|[H_{B},H_{C}]\|\le 6 g_{C} k_{B}\,\|H_{B}\|.
\end{equation}
Thus the AKLH lemma does not privilege the charger or the battery; the restriction arises entirely from taking one Hamiltonian to be non-interacting or interacting with all interaction terms pairwise commuting. This symmetry explains why superextensive power is generically forbidden when either $H_{B}$ or $H_{C}$ has bounded locality and bounded $g$~\cite{JuliaFarre2020Bounds, Sarkar2025kLocality} which does not grow with the system size.

To substantiate this viewpoint, we consider a fully interacting battery--charger pair with periodic boundary conditions. The battery is modeled as a central-spin Hamiltonian with Ising interactions among the peripheral spins,
\begin{align*}
H_{B}=\sum_{j=1}^{N}\sigma_{0}^{y}\sigma_{j}^{x}
+ \sum_{j=1}^{N}\sigma_{j}^{z}\sigma_{j+1}^{z},  & \ \ \ \sigma_x^{(N+1)} \equiv \sigma_x^{(1)}.
\end{align*}
while the charger is taken as
\begin{align*}
    H_{C}=\sum_{j=0}^{N}\sigma_{j}^{x}\sigma_{j+1}^{x},  & \ \ \ \sigma_x^{(N+1)} \equiv \sigma_x^{(0)}.
\end{align*}

\begin{figure}[h]
    \centering
    \includegraphics[width=\columnwidth]{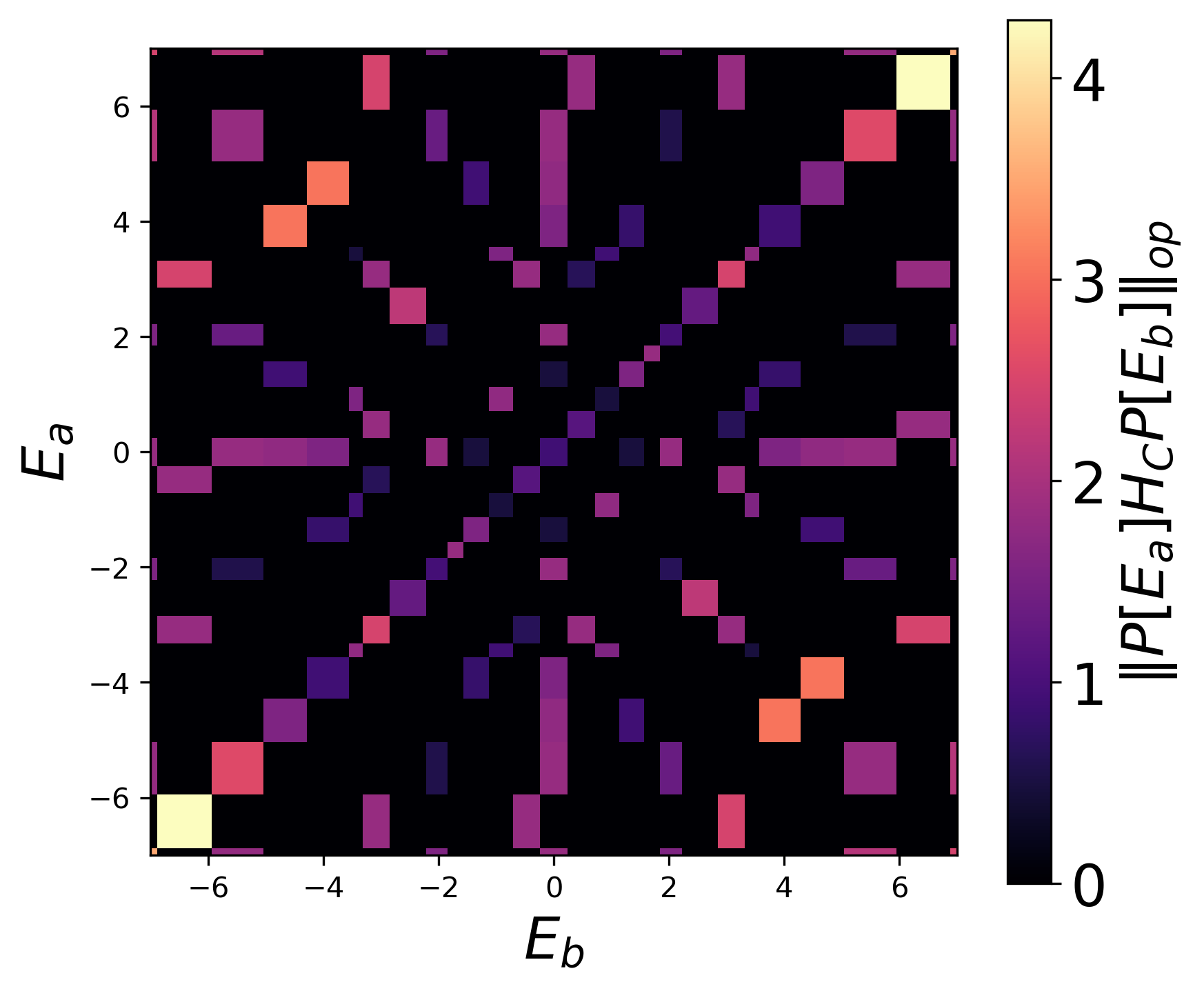}
    \caption{Heatmap of absolute values of transition matrix elements 
        $\| P[E_a] H_C P[E_b] \|$ for the MCS charger in the diagonal basis 
        of the MCS battery. Degenerate levels are grouped as projections onto energy subspaces of the battery Hilbert space.}
    \label{fig:aklh_MCS_MCS}
\end{figure}

For interacting batteries such as this central-spin model, the notion of identical single-site contributions $\|H_{s}\|$ no longer applies uniformly across all sites. Consequently, one cannot use the identical-cell argument \cite{Gyhm2022Quantum} commonly invoked in the non-interacting case. Instead, the extensivity parameter $g$ offers a natural and uniform measure of the overall interaction strength; it coincides with $\|H_{s}\|$ only for batteries composed of identical non-interacting sites.

\begin{figure*}[t]
    \centering
    \includegraphics[height=0.215\textwidth]{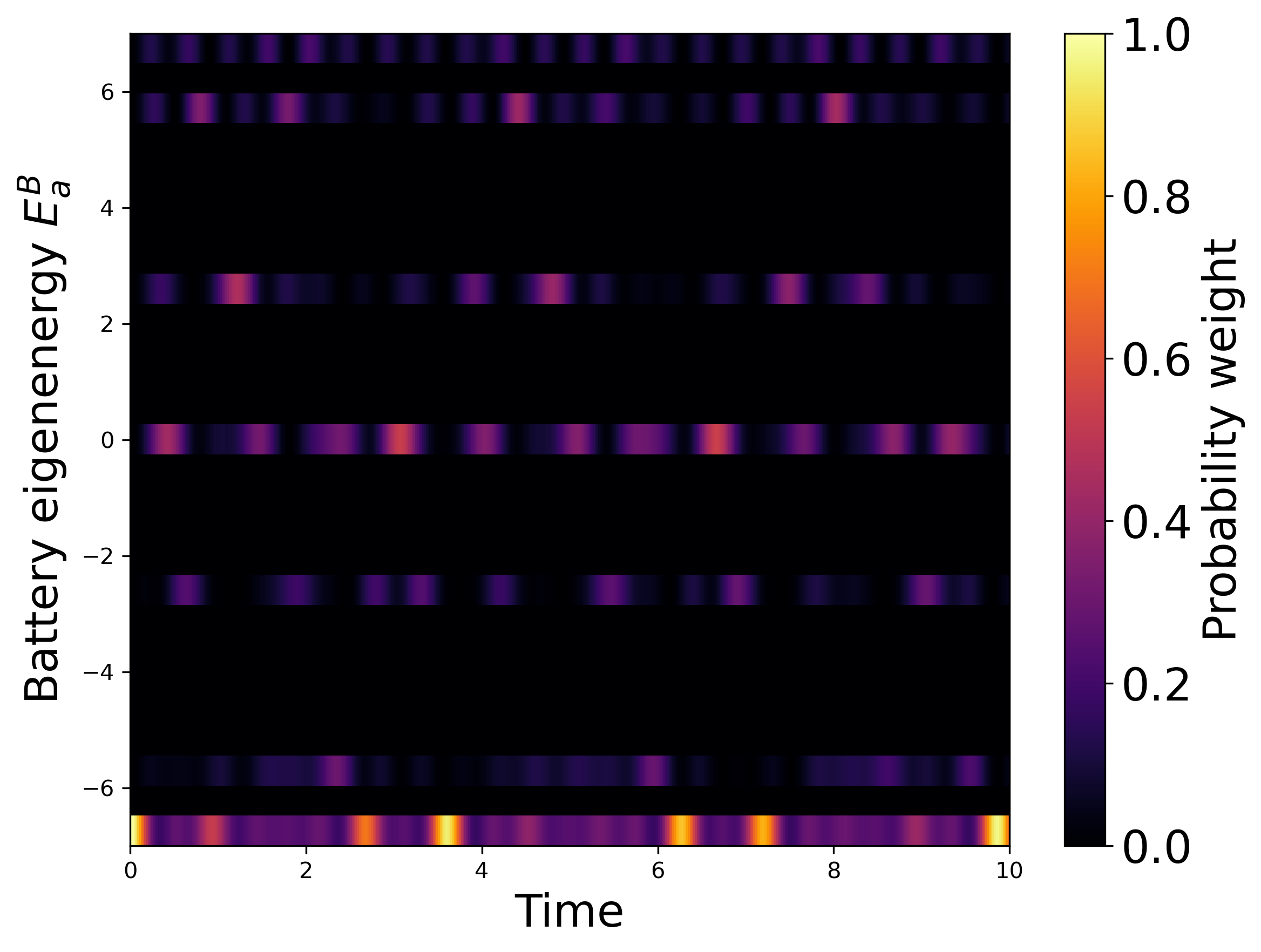}
    \hfill
    \includegraphics[height=0.215\textwidth]{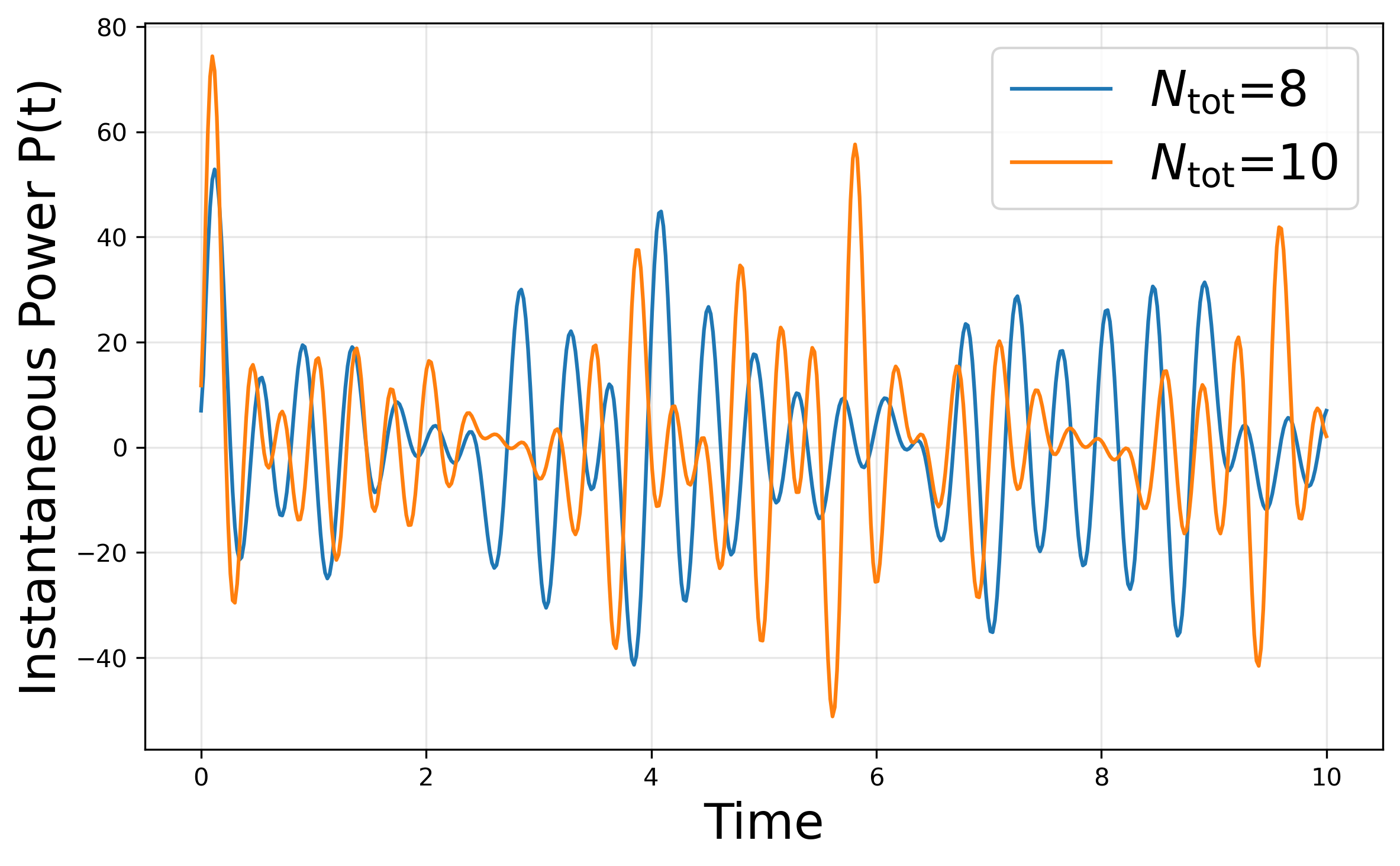}
    \hfill
    \includegraphics[height=0.215\textwidth]{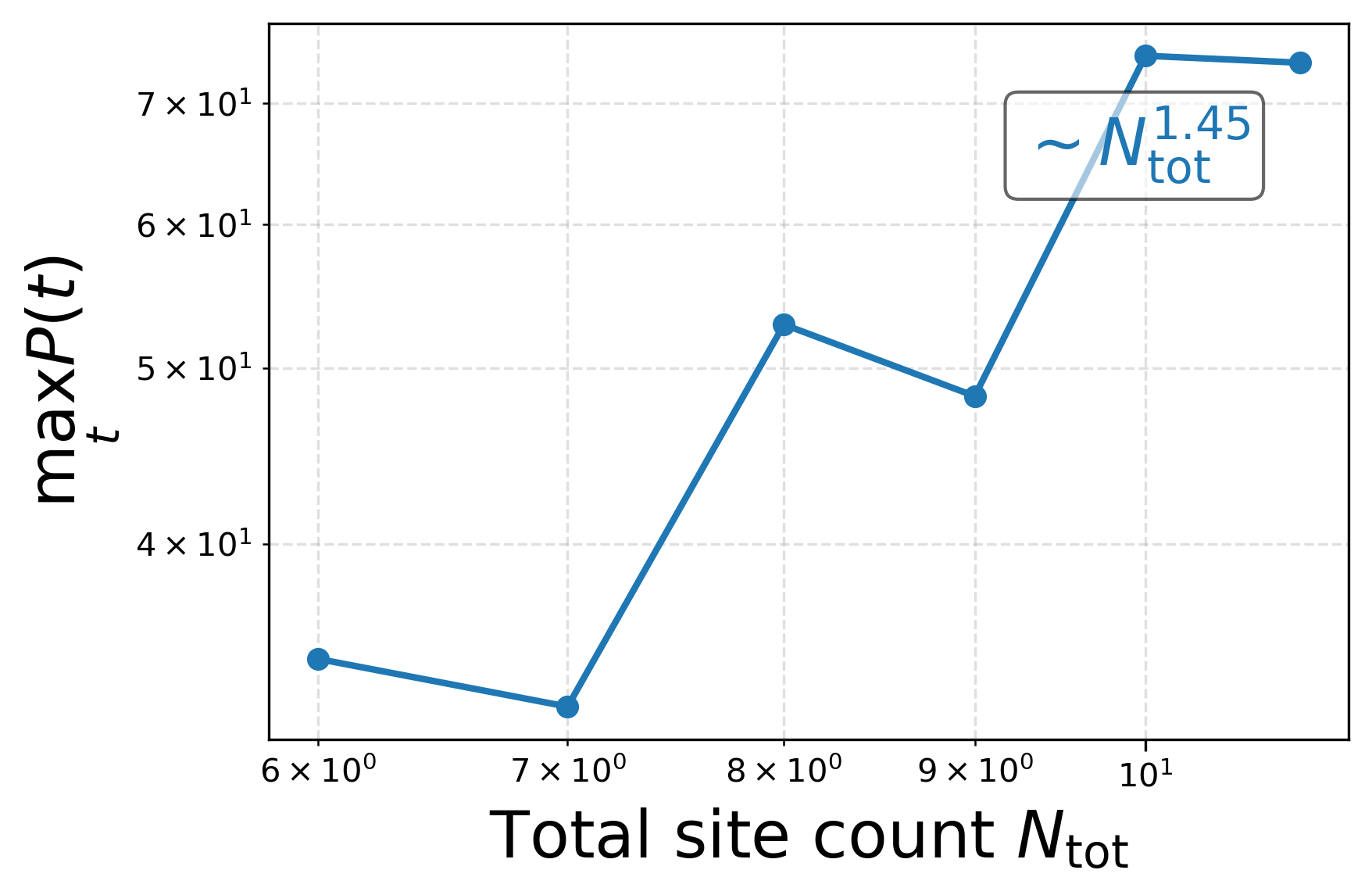}

    \vspace{2pt}
    \makebox[0.3\textwidth][c]{\textbf{(a)}}%
    \hfill
    \makebox[0.3\textwidth][c]{\textbf{(b)}}%
    \hfill
    \makebox[0.3\textwidth][c]{\textbf{(c)}}

    \caption{
        \textbf{Battery Ground State.} For the MCS battery quenched with the MCS charger, we plot: 
        (a) Evolution of the initial ground state with time in the eigenbasis of the Battery,
        (b) The instantaneous power plotted against time,
        (c) The maximum instantaneous power versus total number of sites.
    }
    \label{fig:gs}
\end{figure*}

Although transitions between well-separated energy sectors remain possible when $H_{C}$ is expressed in the eigenbasis of $H_{B}$, exchanging the roles of $H_{B}$ and $H_{C}$ significantly restricts the pattern of energy-sector couplings. Indeed, the AKLH lemma confines energy transitions to a window of width $2g_{C}q$ rather than $2g_{B}k$. As illustrated in Fig.~\ref{fig:aklh_heatmap}a, transitions between well-separated subspaces are generically nonzero, whereas Fig.~\ref{fig:aklh_heatmap}b shows the sharp restriction induced by the $2g_{C}q$ bound (white dashed region). This restriction directly impacts the maximum instantaneous power.

This expectation is borne out in Fig.~\ref{fig:comm_bounds_Ising_MCS}, where we demonstrate numerically that the commutator norm---and hence the maximal instantaneous power---scales as $N_{\mathrm{tot}}^{1.01}$. 

This behavior is consistent with the bound of Eq.~(9), given that $g_{C}=\mathcal{O}(1)$, $k_{B}=2$, and $\|H_{B}\|\sim\mathcal{O}(N)$. Together with our analytical bound~\cite{SupplementaryMaterial}, this numerical result constitutes a central finding of this work: the battery Hamiltonian is not a passive participant but a decisive factor determining the maximal charging power, and interacting battery models can overcome the limitations inherent in non-interacting settings.
 
\prlsection{Superextensive power with low interaction order}
The previous example not only demonstrates the active role of the battery Hamiltonian in determining the maximum attainable power, but also clearly manifests the pivotal role played by $g$. This viewpoint can be further emphasized by the following bound. If both the battery and charger Hamiltonians are $g$-extensive with $g = \mathcal{O}(1)$, i.e.,
\[
H_{B} = \sum_{X} h_{X}^{B}, \qquad 
H_{C} = \sum_{Y} h_{Y}^{C},
\]
then
\begin{equation}
\begin{split}
\|[H_{B},H_{C}]\|
&= \Big\| \Big[\sum_{X} h_{X}^{B}, \sum_{Y} h_{Y}^{C}\Big] \Big\| \\
&\le \sum_{X,Y}\|[h_{X}^{B},h_{Y}^{C}]\|
   \le 2\!\sum_{X,Y}\|h_{X}^{B}\|\,\|h_{Y}^{C}\|.
\end{split}
\end{equation}
Since $[h_{X}^{B},h_{Y}^{C}] = 0$ if $X \cap Y = \emptyset$, we can write
\begin{equation}
\begin{split}
\sum_{X,Y: X\cap Y\neq\emptyset}\!\!\|h_{X}^{B}\|\,\|h_{Y}^{C}\|
&\le \sum_{i=1}^{N}\!\Big(\sum_{X\ni i}\|h_{X}^{B}\|\Big)
       \Big(\sum_{Y\ni i}\|h_{Y}^{C}\|\Big) \\
&\le \sum_{i=1}^{N} g_{B} g_{C}
   = N g_{B} g_{C}.
\end{split}
\end{equation}

If both $g_{B}$ and $g_{C}$ are of order $\mathcal{O}(1)$, then, irrespective of the interaction order, the commutator as well as the maximum instantaneous power can scale only extensively with the system size. This demonstrates that a high participation number alone is insufficient for achieving superextensive power if $g$ does not scale with system size. This behavior is precisely observed in the Lipkin–Meshkov–Glick (LMG) model: despite its all-to-all connectivity, enforcing energy extensivity through proper normalization restricts the maximum power to scale extensively when the battery remains non-interacting. See Ref.~\cite{SupplementaryMaterial} for other models where $g$ dictates the correct scaling of the maximum instantaneous power.

\begin{figure*}[t]
    \centering
    \includegraphics[height=0.215\textwidth]{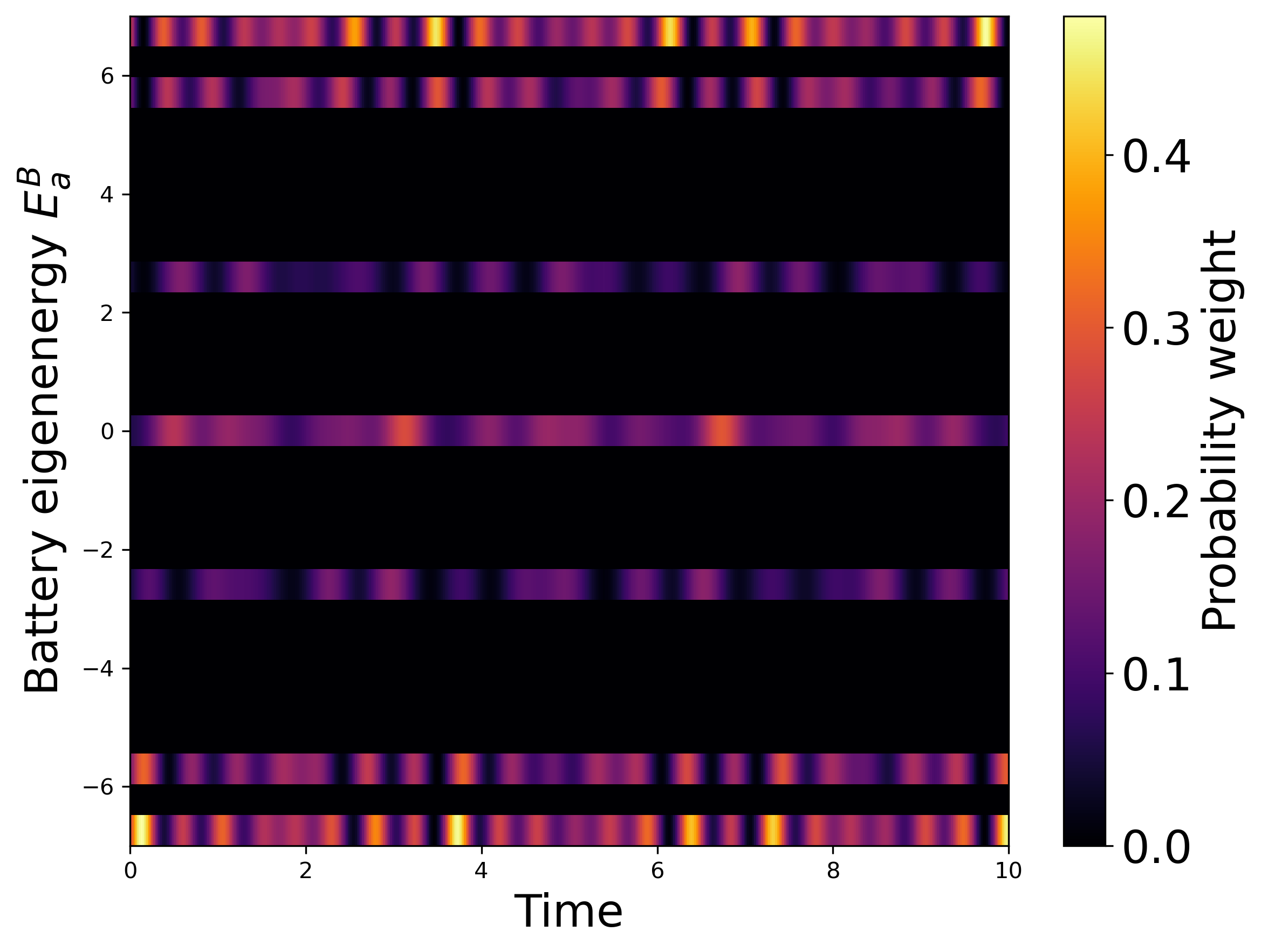}
    \hfill
    \includegraphics[height=0.215\textwidth]{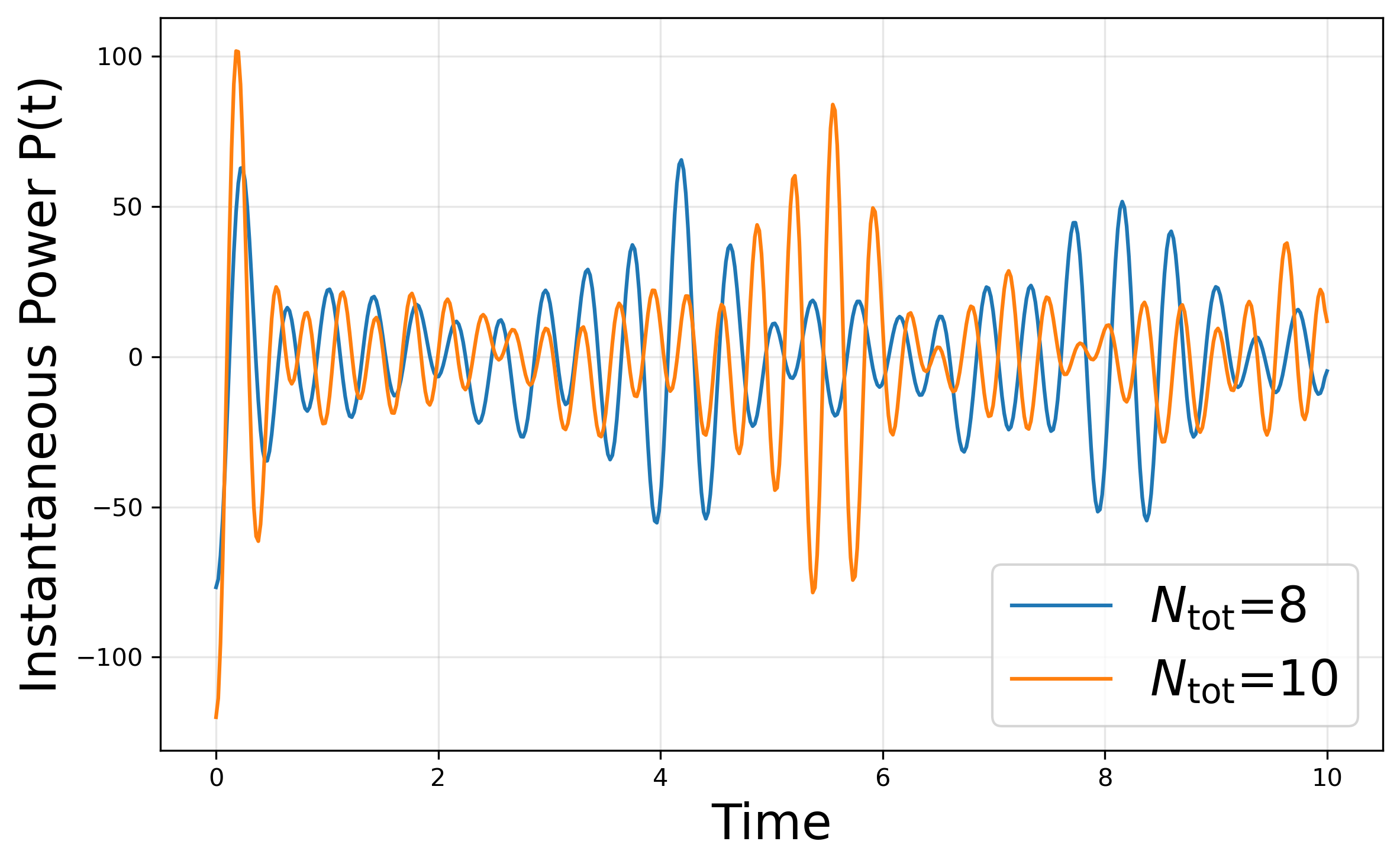}
    \hfill
    \includegraphics[height=0.215\textwidth]{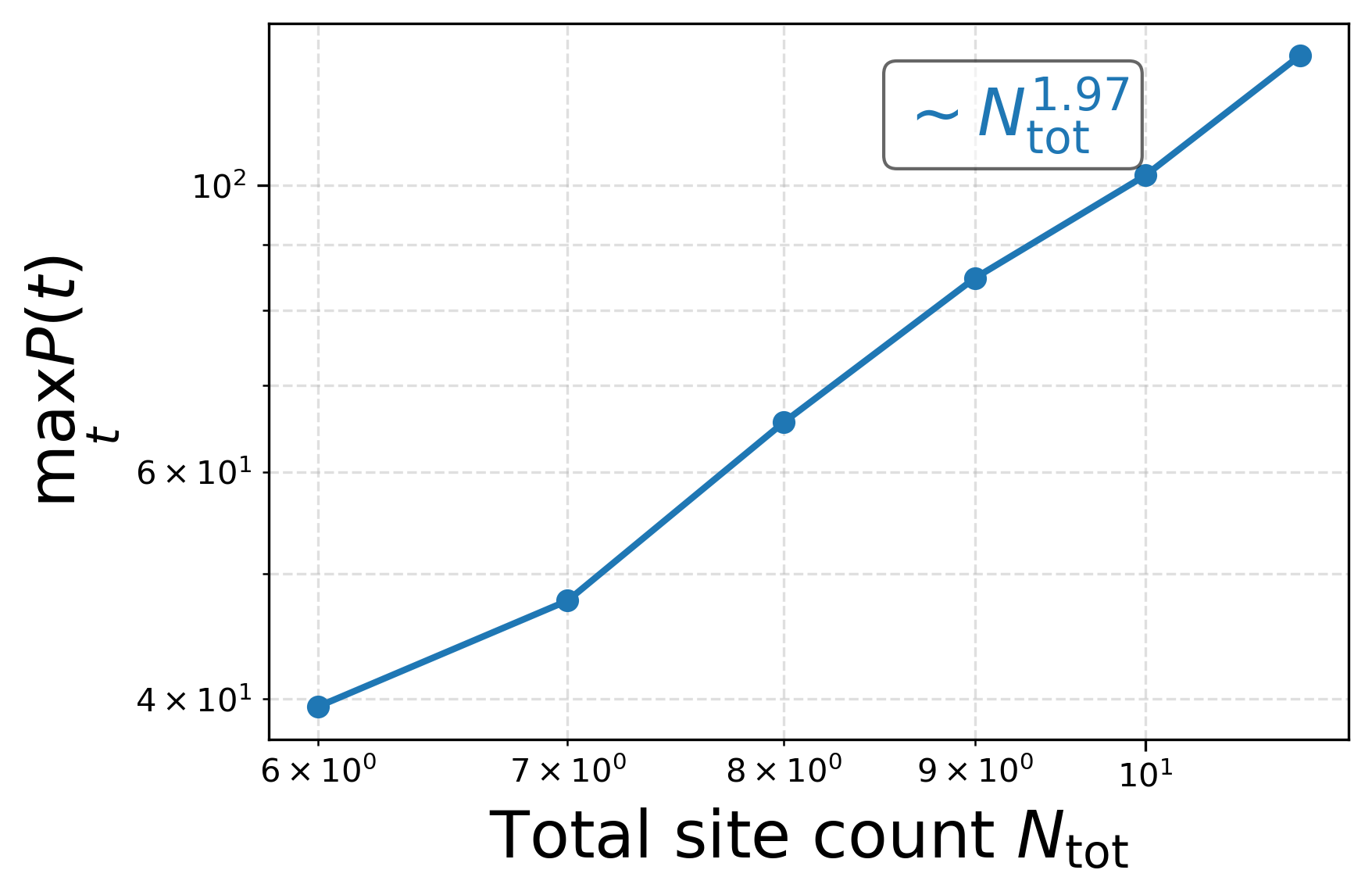}

    \vspace{2pt}
    \makebox[0.3\textwidth][c]{\textbf{(a)}}%
    \hfill
    \makebox[0.3\textwidth][c]{\textbf{(b)}}%
    \hfill
    \makebox[0.3\textwidth][c]{\textbf{(c)}}

    \caption{
        \textbf{Product State.} For the MCS battery quenched with the MCS charger, we plot:
        (a) Evolution of the initial product state with time in the eigenbasis of the Battery,
        (b) The instantaneous power plotted against time,
        (c) The maximum instantaneous power versus total number of sites.
    }
    \label{fig:prod_state}
\end{figure*}

Moreover, this bound hints at the possibility of achieving superextensive power without invoking global operations, while maintaining energy extensivity of both the charger and battery Hamiltonians. The key lies in designing Hamiltonians where the energy budget is distributed unevenly across lattice sites, such that at least one site possesses an interaction energy $g$ that scales with the system size rather than remaining constant.

In Fig. \ref{fig:aklh_heatmap}, the energy transitions are unrestricted when the battery is modeled using the modified central spin model. Even though the locality of the Ising charger is low at $k=2$, the local energy of the central site sets a high $g_B$ (which scales linearly with N) for the battery. Due to this, the AKLH radius within which energy transitions can occur may traverse the entire spectrum.

Choosing both the charger and the battery to be modified central spin Hamiltonians, we can expect that the interchange of Hamiltonians will not tighten bounds on charging power, which theoretically scale as $~ N^2$ in this case. In fact, we numerically confirm a superextensive maximum instantaneous power.

We select the battery Hamiltonian, which is a modified central spin (MCS) model with Ising-type interactions between peripheral spins, as follows:
\begin{align}
H_{\mathrm{B}}
= & \sum_{j=1}^{N}
    \sigma_x^{(0)} \sigma_x^{(j)}
  + \sum_{j=1}^{N}
    \sigma_z^{(j)} \sigma_z^{(j+1)},
\qquad \\
&\sigma_z^{(N+1)} \equiv \sigma_z^{(1)} \; \text{(PBC)}. \nonumber
\end{align}
and quench it with a similar charger Hamiltonian:
\begin{align}
H_{\mathrm{C}}
=& \sum_{j=1}^{N}
    \sigma_y^{(0)} \sigma_x^{(j)}
  + \sum_{j=1}^{N}
    \sigma_z^{(j)} \sigma_z^{(j+1)},
\qquad \\
&\sigma_z^{(N+1)} \equiv \sigma_z^{(1)} \; \text{(PBC)}. \nonumber
\end{align}
The energy transition amplitudes are numerically plotted in Fig. \ref{fig:aklh_MCS_MCS} for this model. As expected, AKLH doesn't restrict transitions to small energy windows, and we have large values of terms far from the diagonal.

Just as in Fig. \ref{fig:aklh_heatmap}, the transitions are not restricted due to small localities. This holds upon an interchange of Hamiltonians, and we therefore may obtain superextensive power.

\prlsection{Initial state dependence of charging dynamics} While the AKLH transition map guarantees the possibility of appreciable transition amplitude between well-separated energy subspaces of the battery Hamiltonian, it does not by itself establish superextensive \emph{charging power}. To verify this directly, we simulate the full dynamics generated by the charger Hamiltonian for several physically relevant initial states.

We begin with the battery ground state, whose time-evolved population dynamics are shown in Fig.~\ref{fig:gs}. The numerical maximum power extracted over the evolution is displayed in Fig.~\ref{fig:gs}(c). Across multiple runs we find superextensive scaling, though the fitted exponent varies between runs; we attribute this variation to the possible ground-state degeneracy of the battery Hamiltonian and the consequent sensitivity to the specific vector chosen within the degenerate subspace. The dataset shown yields an exponent of $\alpha \approx 1.45$, firmly above the extensive threshold.

As a complementary probe, we study a product initial state ($\ket{\psi} = \ket{+_z}_0 \otimes \ket{+_x}^{\otimes N}$), which has support distributed broadly across the battery spectrum. Its dynamics (heatmap of spectral weights, time-dependent power, and scaling of peak power) are shown in Fig.~\ref{fig:prod_state}. In this case the scaling is even clearer: we obtain $\alpha \approx 1.97$, very close to the commutator bound and demonstrating a strong quantum advantage.

Finally, we consider the maximum \emph{attainable} instantaneous power, which is bounded by the commutator spectral norm $\bigl\| [H^{C}, H^{B}] \bigr\|$. Figure~\ref{fig:comm_bounds} shows that this bound itself scales superextensively with system size, approximately as $P_{\max} \propto (N_{\mathrm{tot}})^{1.95}$. This is consistent with the expectation that the optimal state, i.e., the one that attains the operator norm, should exhibit a scaling comparable to the numerical maximum obtained from suitably chosen initial states. For comparison, we also display the two analytic bounds $2\|H_B\|\,\|H_C\|$ and $2 g_B g_C N_{\mathrm{tot}}$, which remain valid but are looser for our model.

\begin{figure}[h]
 \centering
    \includegraphics[width= \columnwidth]{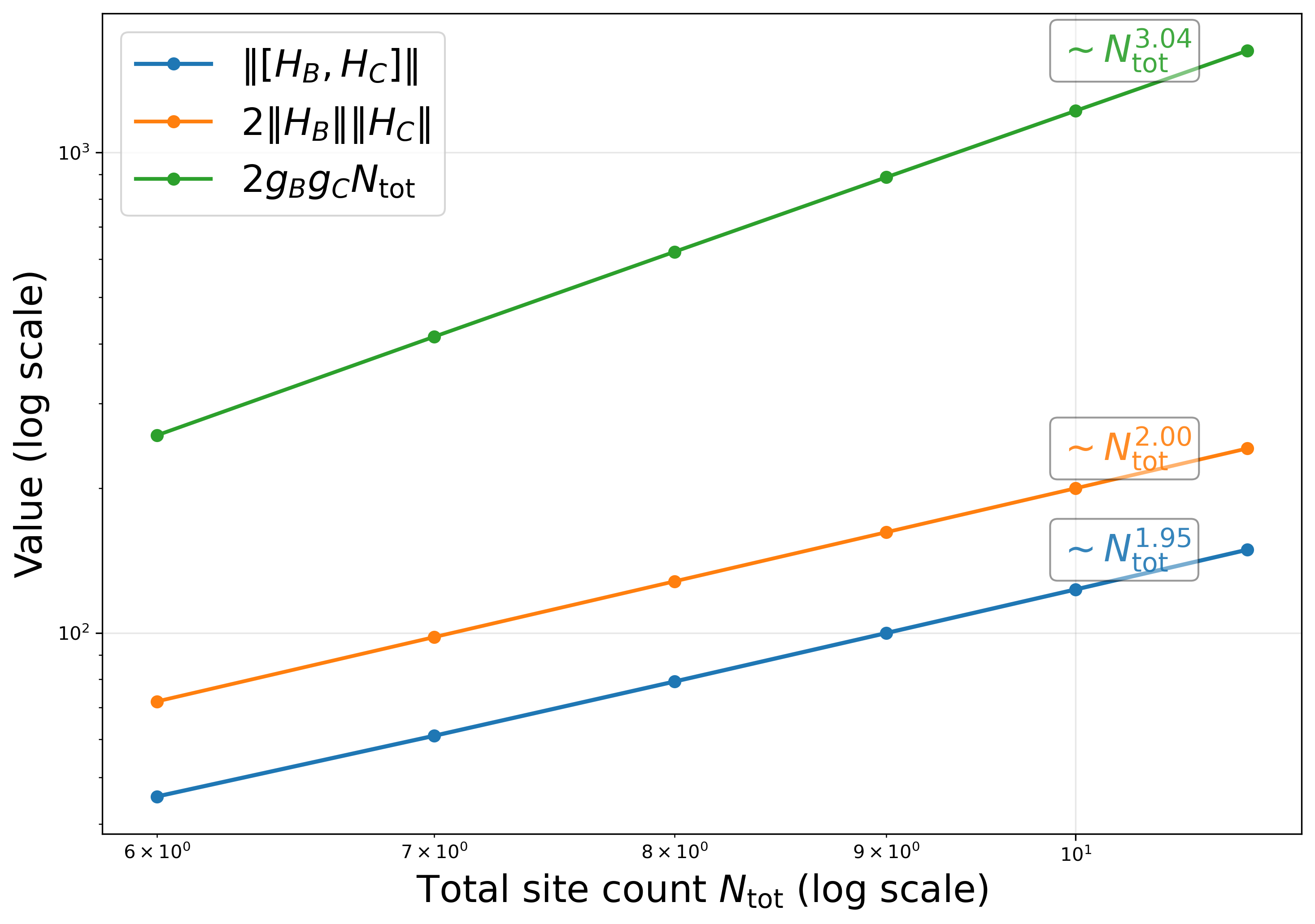}
    \caption{
        Scaling with system size of the commutator operator norm and bounds for an MCS battery and charger
    }
    \label{fig:comm_bounds}
\end{figure}

\prlsection{Discussion}
The key question we address here is whether one can trade the high locality of the battery or charger Hamiltonian for a large value of $g$ when both Hamiltonians are interacting, while still maintaining energy extensivity. Experimental feasibility is also an essential consideration when developing practical quantum batteries. Although many battery--charger architectures may in principle satisfy these criteria, here we focus on a specific and experimentally motivated model that clearly illustrates how $g$-extensivity can enable superextensive power.

Variants of this modified central-spin model have previously appeared in the context of charger-mediated energy transfer \cite{Liu2021CentralSpin, Yang2025CentralSpinSpeedup}. Our setup differs in that we employ the \textit{direct charging protocol}, quenching only the central-spin coupling to the peripheral spins. This model exhibits several properties of independent interest for quantum batteries, such as potentially efficient work extraction and robustness of stored energy~\cite{ Peng2021Bounds, Shi2022Entanglement}. In the present work, however, we focus solely on its ability to generate superextensive power arising from the large value of $g$ for both $H_{B}$ and $H_{C}$. A straightforward calculation~\cite{SupplementaryMaterial} confirms that the commutator of these Hamiltonians grows superextensively with the system size.

An important question that remains is whether quantum effects play a decisive role in these examples. It is well understood that some form of multipartite entanglement is necessary to obtain enhanced power, which is also intuitively clear since a generic $k$-local Hamiltonian can generate $k$-partite entanglement during the dynamics. However, in our case it is \emph{not} the Hamiltonian locality that enables superextensive power; thus the question arises whether the underlying entanglement generation is relevant at all.

To investigate this, we study the dynamics of bipartite entanglement for a system of $7$ spins, where the bipartition consists of the central spin and alternating peripheral spins (in our example $\{0,1,3,5\}$ and $\{2,4,6\}$). The corresponding plots are shown in~\cite{SupplementaryMaterial}. We observe that the growth of entanglement closely mimics the power dynamics. At first sight, this suggests that bipartite entanglement generation may be the mechanism through which higher-energy states are accessed during the dynamics.

For non-interacting batteries, the energy basis can be labelled by local excitations at individual sites, and the charger's interaction order determines the instantaneous energy transition limit $\Delta E$, as well as the source of entanglement generated during evolution. For interacting batteries, the situation is more subtle: the energy eigenstates are entangled, excitations cannot be assigned to individual sites after a sudden quench, and in this case we speculate that the bipartite entanglement growth explicitly depends on~$g$.

Finally, we note that entanglement can suppress locally extractable work \cite{Hokkyo2025ErgotropyBound}. Understanding how a large $g$ influences both the extractable work and the stability of the charged state \cite{Gherardini2020Stabilizing, Bai2020Floquet, PhysRevApplied.14.024092} therefore constitutes an important direction for future research. In addition, extending our framework to other quantum-charging paradigms, such as charger-mediated architectures \cite{Ferraro2018HighPower, Andolina2025Genuine} and Floquet-engineered protocols \cite{Mondal2022PeriodicallyDriven, Puri2024FloquetLongRange}, remains an open and promising avenue for exploration.

\begin{acknowledgments}
We acknowledge D. Rosa, R. K. Shukla and S. Sur for fruitful discussions.
\end{acknowledgments}

\bibliographystyle{apsrev4-2}
\bibliography{bibliography}

\onecolumngrid
\appendix
\section*{Supplemental Material}
\section{Bounding the transition matrix element in commuting case}

\begin{lem}
\cite{Kuwahara2015Thesis} For any $g$-extensive commuting Hamiltonian $H=\sum_{X}h_{X}$ where $[h_{X},h_{X'}]=0~~\forall X,X'$ and \(g\sim\mathcal{O}(1)\). The transition matrix element of another $q$-local operator $h^{(q)}$ acting on sites $S\subset\lambda$ is bounded as follows,
 
\[
\left|\left|\Pi_{\geq E'}h^{(q)}\Pi_{\leq E}\right|\right|=0~~\text{for}~E'-E>2gq.
\]
\end{lem}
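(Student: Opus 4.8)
The plan is to bypass the Chebyshev-polynomial machinery that underlies the general AKLH lemma: in the commuting case one can split $H$ cleanly into a piece living near the support of $h^{(q)}$ and a piece that commutes with $h^{(q)}$, after which the statement collapses to elementary energy bookkeeping.

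First I would let $S\subseteq\Lambda$ be the support of $h^{(q)}$, so $|S|\le q$, and write
\[
H = H_{S} + H_{\bar S},\qquad
H_{S}=\sum_{X:\,X\cap S\neq\emptyset} h_{X},\qquad
H_{\bar S}=\sum_{X:\,X\cap S=\emptyset} h_{X}.
\]
Since every $h_{X}$ commutes with every other, $[H_{S},H_{\bar S}]=0$; and because $H_{\bar S}$ is supported on $\Lambda\setminus S$ while $h^{(q)}$ is supported on $S$, also $[h^{(q)},H_{\bar S}]=0$. Next I would bound $\|H_{S}\|$ using $g$-extensivity: every $X$ meeting $S$ contains some $i\in S$, so
\[
\|H_{S}\|\le\sum_{X:\,X\cap S\neq\emptyset}\|h_{X}\|
\le\sum_{i\in S}\sum_{X\ni i}\|h_{X}\|
=\sum_{i\in S} g_{i}\le |S|\,g\le qg .
\]

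Then I would pass to a common eigenbasis $\{\ket{\alpha}\}$ of the commuting pair $(H_{S},H_{\bar S})$, with $H_{S}\ket{\alpha}=s_{\alpha}\ket{\alpha}$, $H_{\bar S}\ket{\alpha}=\bar s_{\alpha}\ket{\alpha}$, hence $H\ket{\alpha}=(s_{\alpha}+\bar s_{\alpha})\ket{\alpha}$. Using $[h^{(q)},H_{\bar S}]=0$, the identity $\bar s_{\alpha}\bra{\beta}h^{(q)}\ket{\alpha}=\bra{\beta}h^{(q)}H_{\bar S}\ket{\alpha}=\bar s_{\beta}\bra{\beta}h^{(q)}\ket{\alpha}$ forces $\bra{\beta}h^{(q)}\ket{\alpha}=0$ whenever $\bar s_{\alpha}\neq\bar s_{\beta}$. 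So for any surviving matrix element $\bar s_{\alpha}=\bar s_{\beta}$, and if moreover $\ket{\alpha}$ lies in the range of $\Pi_{\le E}$ and $\ket{\beta}$ in the range of $\Pi_{\ge E'}$, then
\[
E'-E\le (s_{\beta}+\bar s_{\beta})-(s_{\alpha}+\bar s_{\alpha})=s_{\beta}-s_{\alpha}\le 2\|H_{S}\|\le 2gq .
\]
Consequently, if $E'-E>2gq$ no such pair $(\alpha,\beta)$ exists, every term of $\Pi_{\ge E'}h^{(q)}\Pi_{\le E}=\sum_{\alpha,\beta}\bra{\beta}h^{(q)}\ket{\alpha}\,\ket{\beta}\bra{\alpha}$ (restricted to the two subspaces) vanishes, and the operator is identically zero.

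The only delicate point, and the step I would be most careful about, is the degeneracy bookkeeping: one must genuinely work in a basis that simultaneously diagonalizes $H_{S}$ and $H_{\bar S}$ (legitimate since they commute) and apply "$h^{(q)}$ preserves $H_{\bar S}$-eigenspaces" at the level of eigenspaces rather than individual vectors, so that within a fixed $\bar s$-eigenspace $h^{(q)}$ is free to act arbitrarily. Everything else is elementary; in particular no analyticity or polynomial-approximation input is needed, because the commuting structure confines the ``action'' of $h^{(q)}$ to the bounded-norm operator $H_{S}$.
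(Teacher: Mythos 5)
Your proof is correct, and it takes a genuinely different route from the paper's. The paper follows the Kuwahara-style imaginary-time filter: it writes $\Pi_{\ge E'}h^{(q)}\Pi_{\le E}=\Pi_{\ge E'}e^{-\lambda H}\bigl(e^{\lambda H}h^{(q)}e^{-\lambda H}\bigr)e^{\lambda H}\Pi_{\le E}$, expands the conjugation in nested commutators $\mathcal{C}_{l}$, uses commutativity of the $h_{X}$ to restrict every level of the expansion to sets intersecting $S$ (so $\|\mathcal{C}_{l}\|\le 2^{l}\|h^{(q)}\|(gq)^{l}$), resums to get $e^{-\lambda(E'-E-2gq)}\|h^{(q)}\|$, and sends $\lambda\to\infty$. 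You instead split $H=H_{S}+H_{\bar S}$, observe that $H_{\bar S}$ commutes with both $H_{S}$ and $h^{(q)}$, pass to a joint eigenbasis, and conclude that $h^{(q)}$ can only shift the total energy by at most $2\|H_{S}\|\le 2gq$ because it conserves the $H_{\bar S}$-eigenvalue. All the steps check out: the degeneracy bookkeeping you flag is handled correctly (the joint eigenbasis diagonalizes $H$, so the spectral projectors of $H$ decompose over it), and the double-counting in $\sum_{i\in S}\sum_{X\ni i}\|h_{X}\|$ only loosens the bound. Your argument is more elementary — no series resummation, no limit — and it makes the origin of the constant $2gq$ transparent: it is twice the norm of the part of $H$ anchored on the support of $h^{(q)}$. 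What the paper's heavier machinery buys is generality: the exponential-filter argument survives when the $h_{X}$ do not commute (where your clean splitting $[H_{S},H_{\bar S}]=0$ and the exact conservation of $\bar s$ both fail), degrading the hard cutoff to the exponential tail of the full AKLH lemma. For the commuting special case actually stated in the lemma, your route is arguably preferable.
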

\begin{proof}
\begin{equation}
\left|\left|\Pi_{\geq E'}h^{(q})\Pi_{\leq E}\right|\right|=\left|\left|\Pi_{\geq E'}e^{-\lambda H}e^{\lambda H}h^{(q)}e^{-\lambda H}e^{\lambda H}\Pi_{\leq E'}\right|\right|\leq e^{-\lambda(E'-E)}\left|\left|e^{\lambda H}h^{(q)}e^{-\lambda H}\right|\right|,
\end{equation}
Now using the identity,
\[
e^{\lambda H}h^{(q)}e^{-\lambda H}=\sum_{l=0}^{\infty}\frac{-\lambda^{l}}{l!}\mathcal{C}_{l},
\]
where $\mathcal{C}_{0}=h^{(q)}$, \(\mathcal{C}_{1}=[h^{(q)},H]=\sum_{X:X\cap S\neq\emptyset}[h^{(q)},h_{X}]\)...,
    \(\mathcal{C}_{l}=\sum_{X_{1}\cap S\neq\emptyset}\sum_{X_{2}\cap S\neq\emptyset}\cdots\sum_{X_{l}\cap S\neq\emptyset}\left[\cdots\left[\left[h^{(q)},h_{X_{1}}\right],h_{X_{2}}\right]\cdots h_{X_{l}}\right],\)
Using the sub-multiplicative property of the operator norm and the $g$-extensivity of the Hamiltonian we can write,
\begin{equation}
\left|\left|\mathcal{C}_{l}\right|\right|\leq 2^{l}||h^{(q)}||\left(\sum_{X:X\cap S\neq\emptyset}||h_{X}||\right)^{l}\leq 2^{l}||h^{(q)}||(gq)^{l},
\end{equation}
Therefore \[
\|e^{\lambda H}h^{(q)}e^{-\lambda H}\|\le\sum_{l=0}^{\infty}\frac{-\lambda^{l}}{l!}2^{l}\|h^{(q)}\|(gq)^{l}=e^{-2\lambda gq}\|h^{(q)}\|,
\]
Then,
\begin{equation}
    \|\Pi_{\ge E'}h^{(q)}\Pi_{\le E}\|\le e^{-\lambda[(E'-E)-2gq]}\|h^{(q)}\|.
\end{equation}
For, $E'-E>2gq$, \(e^{-\lambda[(E'-E)-2gq]}\|h^{(q)}\|\rightarrow0\) as $\lambda\rightarrow \infty.$
\end{proof}
\section{Commutator bound for commuting Hamiltonians}

In the main text, we considered the central spin model as the battery Hamiltonian 
and the Ising model as the charger Hamiltonian. The corresponding Hamiltonians are
\begin{align*}
H_{B}=\sum_{j=1}^{N}\sigma_{0}^{y}\sigma_{j}^{x}
+ \sum_{j=1}^{N}\sigma_{j}^{z}\sigma_{j+1}^{z},  & \ \ \ \sigma_x^{(N+1)} \equiv \sigma_x^{(1)} \; \text{(PBC)}.
\end{align*}
and
\begin{align*}
    H_{C}=\sum_{j=0}^{N}\sigma_{j}^{x}\sigma_{j+1}^{x},  & \ \ \ \sigma_x^{(N+1)} \equiv \sigma_x^{(0)} \; \text{(PBC)}.
\end{align*}
Since $H_{C}$ is a commuting Hamiltonian, we may directly employ Lemma~1 to 
bound the transition matrix elements entering the commutator 
$\|[H_{B},H_{C}]\|$.  
To generalize this argument for any battery and charger Hamiltonians, 
we consider an arbitrary commuting Hamiltonian $H^{(c)}$ and another 
$k$-local Hamiltonian $H$.  
This bound has appeared in the literature \cite{Kuwahara2016Exponential, Sarkar2025kLocality}; however, for completeness, we provide
the full derivation below.

\subsection{Discretizing the spectrum of the commuting Hamiltonian}

We discretize the spectrum of the commuting Hamiltonian $H^{(c)}$ as
\begin{equation}
    H^{(c)'}=\sum_{m=-\infty}^{\infty}\left(m\epsilon+\frac{\varepsilon}{2}\right)
    \Pi_{[m\varepsilon,m\varepsilon+\varepsilon)},
\end{equation}
where $\Pi_{[m\varepsilon,m\varepsilon+\varepsilon)}$ is the projector onto the
eigenspace of $H^{(c)}$ whose eigenvalue lies within the interval 
$[m\varepsilon,m\varepsilon+\varepsilon)$.  
We denote 
\[
    \delta H^{(c)}= H^{(c)} - H^{(c)'}.
\]
By construction, 
\[
    \|\delta H^{(c)}\| \leq \frac{\varepsilon}{2},
\]
which is the maximum discretization error arising from replacing 
the actual spectrum with the nearest discretized values.

We now write
\begin{equation}
\label{eq:comm_split}
    \|[H^{(c)},H]\|
    \le \|[H^{(c)'},H]\| + \|[\delta H^{(c)},H]\|.
\end{equation}
By sub-multiplicative property of the operator norm,
\[
    \|[\delta H^{(c)},H]\|\le 2\|\delta H^{(c)}\|\,\|H\|\le \varepsilon\|H\|.
\]

\subsection{Bounding the commutator using transition elements}

Using the projectors $\Pi_{m}\equiv \Pi_{[m\varepsilon,m\varepsilon+\varepsilon)}$,
we write
\begin{align}
    \|[H^{(c)'},H]\|=\max_{\ket{\psi}\in\mathcal{H}}\left|\bra{\psi}[H^{(c)'},H]\ket{\psi}\right|
    &= \max_{\ket{\psi}}\left|
    \sum_{m,m'}\bra{\psi}\Pi_{m}(H^{(c')}H-HH^{(c')})\Pi_{m'}\ket{\psi}\right|
    \\
    &=\max_{\ket{\psi}}\left|\sum_{m,m'}\epsilon(m'-m)\bra{\psi}\Pi_{m}H\Pi_{m'}\ket{\psi}\right|
     \\
    &\le
      \sum_{m,m'}\epsilon|m'-m|\|\bra{\psi}\Pi_{m}\| \,
    \|\Pi_{m} H \Pi_{m'}\|\,\|\Pi_{m'}\ket{\psi}\|.
\end{align}
Denoting, \(\|\bra{\psi}|\Pi_{m}\|=\alpha_{m}\) and \(\|\Pi_{m'}\ket{\psi}\|=\alpha_{m'}\), Lemma~1 implies that 
\[
    \|\Pi_{m}H\Pi_{m'}\|= 0
    \qquad \text{if}\qquad
    m\epsilon-(m'\epsilon+\epsilon) >{2gk}.
\]
Hence only the terms with 
\[
    |m-m'|\le \frac{2gk}{\varepsilon}+1
\]
contribute.

Thus,
\begin{align}
    \|[H^{(c)'},H]\|
    &\le \varepsilon\|H\|
    \sum_{\substack{m,m'\\ |m-m'|\le \frac{2gk}{\epsilon}+1}}
    |m'-m| \alpha_{m}\alpha_{m'} \\
    &\le \varepsilon\|H\|
    \sum_{|m-m'|\le \frac{2gk}{\varepsilon}+1}
    \frac{|m'-m|}{2}\left(\alpha_{m}^{2}+\alpha_{m'}^{2}\right) \\
    &\le\epsilon\|H\|\sum_{|m-m'|\le\frac{2gk}{\epsilon}+1}|m-m'|\alpha_{m'}^{2}\\
    &\le \varepsilon\|H\|
    \sum_{m=-\lfloor{\frac{2gk}{\varepsilon}+1}\rfloor}^{\lfloor{\frac{2gk}{\varepsilon}+1}\rfloor}|m|\sum_{m'}\alpha_{m'}^{2}.
\end{align}

Since $\sum_{m=-L}^{L}|m| = L(L+1)$ and $\sum_{m'}\alpha_{m'}^{2}=1$, we obtain
\[
    \|[H^{(c)'},H]\|
    \le \varepsilon\|H\|
    \left(\frac{2gk}{\varepsilon}\right)
    \left(\frac{2gk}{\varepsilon}+1\right).
\]

Including the earlier error term from \eqref{eq:comm_split}, we get
\[
    \|[H^{(c)},H]\|
    \le \epsilon||H||\left(\frac{2gk}{\varepsilon}\right)
    \left(\frac{2gk}{\varepsilon}+1\right)+\epsilon||H||.
\]
The optimal value of $\epsilon$ is $2gk+\delta\epsilon,~\delta\epsilon>0$, which minimizes the right hand side of the inequality.
Taking $\delta\epsilon\to 0$, the optimal bound becomes
\begin{equation}
    \|[H^{(c)},H]\| \le 6gk\,\|H\|.
\end{equation}

Thus, the commutator between a commuting $g$-extensive Hamiltonian 
and any $k$-local Hamiltonian is bounded by a quantity proportional to $gk$.

\section{Examples of the $6 g k \|H\|$ Bound}

We now discuss the results reported in \cite{Le2018SpinChain} in light of Proposition~1 of our main text, illustrating how it correctly predicts the upper bound scaling of the maximum power observed numerically in that work. The battery Hamiltonian considered therein is given by
\begin{equation}
    H_{B} = B \sum_{i=1}^{N} \sigma_{i}^{z}
    - \sum_{i<j} g_{ij} \left[ \sigma_{i}^{z}\sigma_{j}^{z}
    + \alpha\left( \sigma_{i}^{x}\sigma_{j}^{x} + \sigma_{i}^{y}\sigma_{j}^{y} \right) \right],
\end{equation}
where the coupling strength is defined as
\begin{equation}
    g_{ij} =
    \begin{cases}
        g\,\delta_{i,j-1}, & \text{for nearest-neighbour interactions}, \\
        \dfrac{g}{|i-j|^{p}}, & \text{for long-range interactions}.
    \end{cases}
\end{equation}

\vspace{0.5em}
\noindent\textbf{Nearest-neighbour case.}
For $g_{ij} = g\delta_{i,j-1}$, the maximum onsite energy is
\[
g_{B} = B + 2g(1 + 2\alpha),
\]
which scales as $\mathcal{O}(1)$, while the operator norm of the Hamiltonian scales as $\|H_{B}\| = \mathcal{O}(N)$. The charger Hamiltonian $H_C$ is taken to be non-interacting. In this situation, one finds $k = 1$ and $q = 2$, yielding a maximum average power of order $\mathcal{O}(N)$. Since $g_{c} \sim \mathcal{O}(1)$ when $\|H_{C}\| \sim \mathcal{O}(N)$, we obtain
\[
\, g_{c}\|H_{B}\| \sim \mathcal{O}(N),
\]
which indicates extensive scaling of the maximum achievable power.

\vspace{0.5em}
\noindent\textbf{Long-range case with $p=1$.}
For Coulomb-like interactions, $g_{ij} = g/|i-j|$, the onsite energy becomes
\[
g_{B} = B + g\left( 1 + \frac{1}{2} + \frac{1}{3} + \cdots + \frac{1}{N} \right)
+ g\alpha\left( 1 + \frac{1}{2} + \frac{1}{3} + \cdots + \frac{1}{N} \right),
\]
which scales as $\mathcal{O}(\ln N)$. In this case, $\|H_{B}\| \sim \mathcal{O}(N\ln N)$, and we have:
\[
 \, g_{c}\|H_{B}\| \sim \mathcal{O}(N\ln N).
\]
This logarithmic enhancement of power scaling with system size agrees precisely with the numerical results reported in \cite{Le2018SpinChain}.

\vspace{0.5em}
\noindent\textbf{Infinite-range case ($p=0$).}
When $p=0$, the model reduces to an all-to-all coupled system, where each spin interacts with every other spin. Consequently, each site contributes to $N$ pairwise terms, giving $g_{B} \sim \mathcal{O}(N)$ while $\|H_{B}\| \sim \mathcal{O}(N^{2})$. In this limit, the upper bound $6 g_C q \|H_B\|$ of the maximum extractable power scales quadratically with system size, $\mathcal{O}(N^{2})$, consistent with the scaling behaviour reported in \cite{Le2018SpinChain}. This example clearly highlights the significance of the parameter $g$, or equivalently, the onsite energy associated with the battery or charger Hamiltonian, in determining the power scaling behaviour rather than the participation number that was previously emphasized in the literature~\cite{Campaioli2017Enhancing}. Following the same reasoning, it becomes evident why the Lipkin–Meshkov–Glick (LMG) Hamiltonian, despite its all-to-all coupling structure, cannot yield super-extensive power when the Hamiltonian is required to remain energy-extensive. In the LMG model, although the participation number $m$ increases with system size, the onsite energy remains bounded as $\mathcal{O}(1)$. Consequently, the maximum achievable power scales only extensively with $N$, in accordance with the constraints of energy extensivity.

\section{Commutator Evaluation between $H_B$ and $H_C$}
For the following battery and charger Hamiltonians, respectively:
\begin{align}
H_B &= \sum_{i=1}^{N} \sigma_x^{(0)} \sigma_x^{(i)} 
   + \sum_{j=1}^{N} \sigma_z^{(j)} \sigma_z^{(j+1)}, \\[4pt]
H_C &= \sum_{j=1}^{N} \sigma_y^{(0)} \sigma_x^{(j)} 
   + \sum_{j=1}^{N} \sigma_z^{(j)} \sigma_z^{(j+1)}.
\end{align}
Note that $\|H_B\|,\|H_C\| \sim O(N)$, even though the values are not exactly $N$ (which is numerically set to $N$ for the plots by including a $O(1)$ normalization factor in the main text). We write $H_B = H_{CS} + H_{Ising}$ and $H_C = H_{CS}^{'} + H_{Ising}$, with
\begin{align}
H_{CS} &= \sum_i \sigma_x^{(0)} \sigma_x^{(i)}, \qquad
H_{CS}^{'} = \sum_j \sigma_y^{(0)} \sigma_x^{(j)}, \qquad
H_{Ising} = \sum_k \sigma_z^{(k)} \sigma_z^{(k+1)}.
\end{align}
Then,
\begin{equation}
[H_B, H_C] = [H_{CS},H_{CS}^{'}] + [H_{CS},H_{Ising}] + [H_{CS}^{'},H_{Ising}],
\end{equation}
since $[H_{Ising},H_{Ising}]=0$.
For the first term,
\begin{align}
[H_{CS},H_{CS}^{'}]
&= \sum_{i,j} [\sigma_x^{(0)}\sigma_x^{(i)},\, \sigma_y^{(0)}\sigma_x^{(j)}] 
= \sum_{i,j} [\sigma_x^{(0)},\sigma_y^{(0)}]\,\sigma_x^{(i)}\sigma_x^{(j)} \notag\\
&= 2i\,\sigma_z^{(0)} \sum_{i,j} \sigma_x^{(i)}\sigma_x^{(j)}
= 2i\,\sigma_z^{(0)}\!\left(\sum_{i=1}^{N}\sigma_x^{(i)}\right)^{\!2}.
\end{align}
This dominates the scaling of the commutator's operator norm at $O(N^2)$.
For the nearest-neighbour Ising part,
\begin{align}
[H_{CS},H_{Ising}] 
&= \sum_{i,k} [\sigma_x^{(0)}\sigma_x^{(i)},\, \sigma_z^{(k)}\sigma_z^{(k+1)}] \notag\\
&= -2i\,\sigma_x^{(0)} 
   \sum_{i=1}^{N}\!\left(
      \sigma_y^{(i)}\sigma_z^{(i+1)} 
      + \sigma_z^{(i-1)}\sigma_y^{(i)}
   \right),
\\[4pt]
[H_{CS}^{'},H_{Ising}] 
&= \sum_{k,j} [\sigma_y^{(0)}\sigma_x^{(j)} ,\, \sigma_z^{(k)}\sigma_z^{(k+1)}] \notag\\
&= - 2i\,\sigma_y^{(0)} 
   \sum_{k=1}^{N}\!\left(
      \sigma_y^{(k)}\sigma_z^{(k+1)} 
      + \sigma_z^{(k)}\sigma_y^{(k+1)}
   \right).
\end{align}
Both $[H_{CS},H_{Ising}]$ and $[H_{CS}^{'},H_{Ising}]$ have contributions to the operator norm that scale linearly with $N$.
Collecting all parts,
\begin{align}
[H_B,H_C]
&= 2i\,\sigma_z^{(0)}\!\left(\sum_{i=1}^{N}\sigma_x^{(i)}\right)^{\!2} \notag\\
&\quad - 2i\,\sigma_x^{(0)} 
   \sum_{i=1}^{N}\!\left(
      \sigma_y^{(i)}\sigma_z^{(i+1)} 
      + \sigma_z^{(i-1)}\sigma_y^{(i)}
   \right) \notag\\
&\quad - 2i\,\sigma_y^{(0)} 
   \sum_{i=1}^{N}\!\left(
      \sigma_y^{(i)}\sigma_z^{(i+1)} 
      + \sigma_z^{(i)}\sigma_y^{(i+1)}
   \right).
\end{align}
To see that the first term, proportional to $\sigma_z^{(0)}(\sum_i \sigma_x^{(i)})^2$, dominates the operator norm as $\|[H_B,H_C]\|\sim \mathcal{O}(N^2)$, while the remaining terms scale as $\mathcal{O}(N)$, consider the product state $\ket{\psi} = \ket{+_z}_0 \otimes \ket{+_x}^{\otimes N}$, and the inequality:
\begin{align}
    \|[H_B,H_C]\| \geq |\bra{\psi}[H_B,H_C]\ket{\psi}| = |2 i N^2| = 2 N^2.
\end{align}
Thus, the commutator exhibits super-extensive growth due to the collective central–spin interaction.

\section{Variance of $H_B$ and $H_C$ for the product state}
We consider the product state
\[
\ket{\psi} = \ket{+_z}_0 \otimes \ket{+_x}^{\otimes N},
\]
and the Hamiltonians
\[
H_B = \sum_{i=1}^{N} \sigma_x^{(0)}\sigma_x^{(i)} 
     + \sum_{j=1}^{N}\sigma_z^{(j)}\sigma_z^{(j+1)}, \qquad
H_C = \sum_{k=1}^{N}\sigma_y^{(0)}\sigma_x^{(k)} 
     + \sum_{j=1}^{N}\sigma_z^{(j)}\sigma_z^{(j+1)}.
\]
\paragraph*{Expectation values.}
Since $\braket{+_z|\sigma_x|+_z} = \braket{+_z|\sigma_y|+_z} = 0$ and 
$\braket{+_x|\sigma_z|+_x}=0$, all linear terms vanish, giving
\[
\braket{H_B}_\psi = \braket{H_C}_\psi = 0.
\]
\paragraph*{Second moments.}
Decompose $H_B =  H_{CS} + H_{Ising}$ with 
$H_{CS} = \sum_i \sigma_x^{(0)}\sigma_x^{(i)}$ and 
$ H_{Ising} = \sum_j \sigma_z^{(j)}\sigma_z^{(j+1)}$. 
Using $(\sigma_x^{(0)})^2 = \mathbb{I}$,
\[
H_{CS}^2 = \sum_{i,k = 1}^{N}\sigma_x^{(i)}\sigma_x^{(k)} 
\quad\Rightarrow\quad 
\braket{H_{CS}^2}_\psi = \sum_{i,k= 1}^{N} 
\braket{+_x^{\otimes N}|\sigma_x^{(i)}\sigma_x^{(k)}|+_x^{\otimes N}} = N^2,
\]
since $\braket{+_x|\sigma_x|+_x}=1$.
For the ring Ising term,
\[
H_{Ising}^2 = \sum_{j,\ell= 1}^{N}\sigma_z^{(j)}\sigma_z^{(j+1)}\sigma_z^{(\ell)}\sigma_z^{(\ell+1)},
\]
and only identical bonds contribute on $\ket{+_x}^{\otimes N}$, yielding
$\braket{H_{Ising}^2}_\psi = N$.
Mixed terms vanish, $\braket{H_{CS}H_{Ising}}_\psi = 0$, because each contains an unpaired
$\sigma_z$ on some site with zero mean. Hence
\[
\braket{H_B^2}_\psi = N^2 + N, 
\qquad \mathrm{Var}_\psi(H_B) = N^2 + N.
\]
\paragraph*{Variance of $H_C$.}
Similarly, write $H_C = H_{CS}^{'} + H_{Ising}$ with 
$ H_{CS}^{'} = \sum_k \sigma_y^{(0)}\sigma_x^{(k)}$.
Since $(\sigma_y^{(0)})^2=\mathbb{I}$ and 
$\braket{+_x|\sigma_x|+_x}=1$,
\[
\braket{( H_{CS}^{'})^2}_\psi = N^2, \qquad 
\braket{H_{Ising}^2}_\psi = N, \qquad 
\braket{ H_{CS}^{'}H_{Ising}}_\psi = 0.
\]
Therefore
\[
\braket{H_C^2}_\psi = N^2 + N, 
\qquad \mathrm{Var}_\psi(H_C) = N^2 + N.
\]
Therefore, for the chosen product state,
\[
\mathrm{Var}_\psi(H_B) = \mathrm{Var}_\psi(H_C) = N^2 + N,
\]
implying both variances scale quadratically with system size:
\[
\mathrm{Var}_\psi(H_B),\ \mathrm{Var}_\psi(H_C) \sim \mathcal{O}(N^2).
\]
This scaling matches that of the commutator norm 
$\|[H_B,H_C]\|\sim N^2$, showing that this state already exhibits
super-extensive fluctuations.

\section{Power and Rate of Entanglement Growth}
We numerically study the dynamics of bipartite entanglement growth in a system with $N_{tot} = 7$, where the bipartition includes the center and alternating peripheral sites. The plots obtained are shown in Figures \ref{fig:entropy}.

\begin{figure*}[t]
    \centering
    \includegraphics[width=0.48\textwidth]{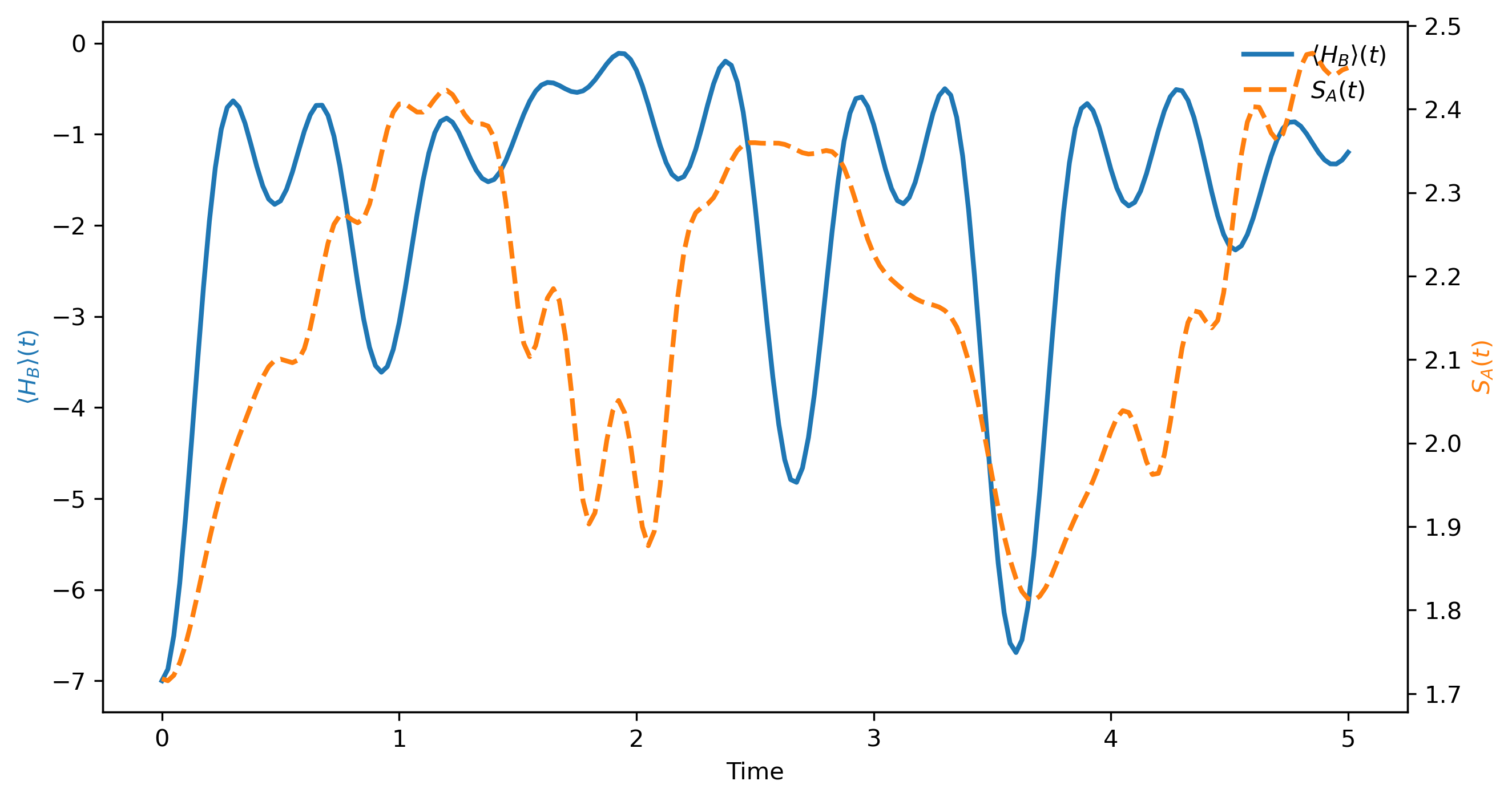}
    \hfill
    \includegraphics[width=0.48\textwidth]{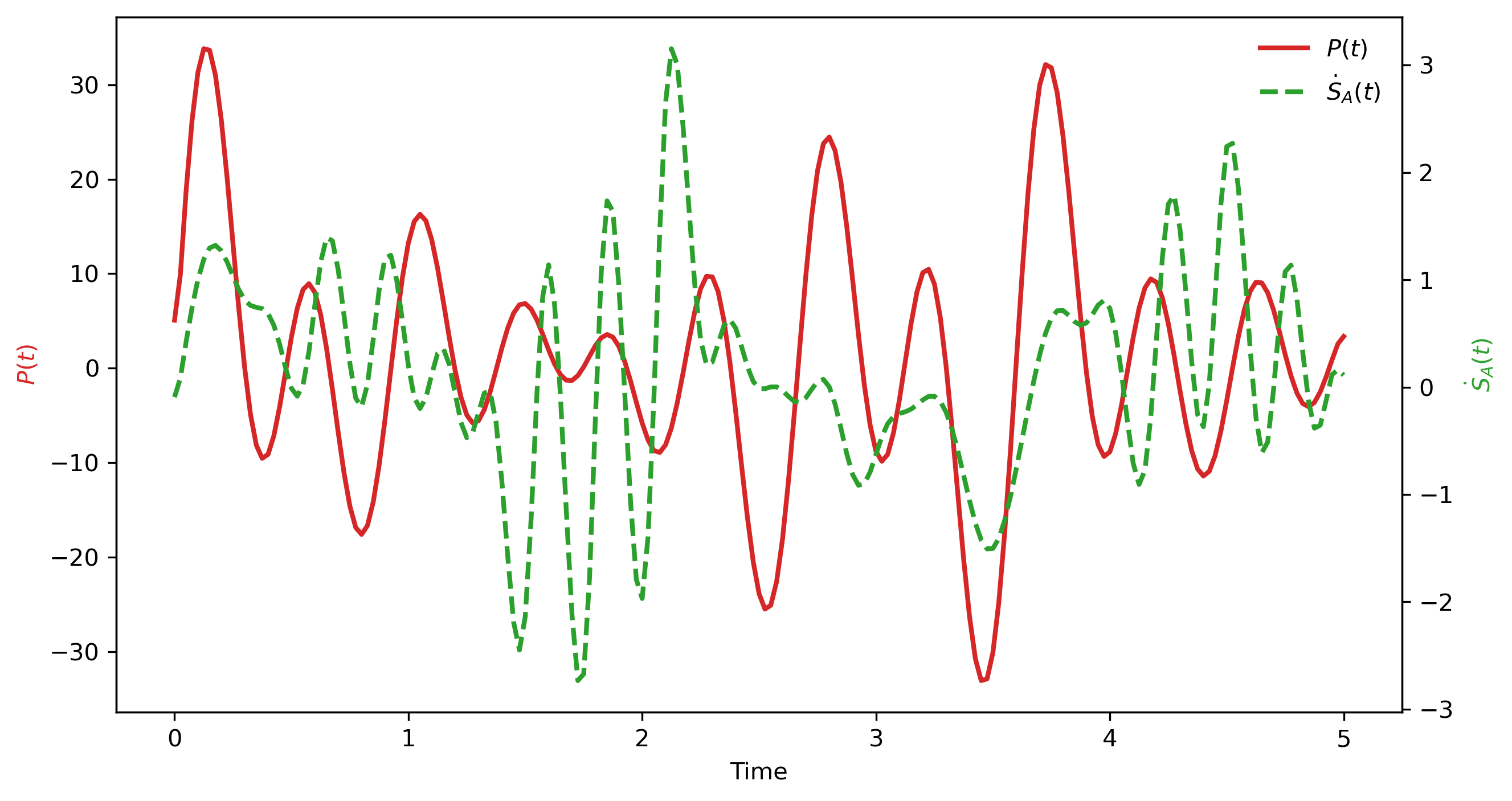}
    \hfill
 
    \vspace{2pt}
    \makebox[0.4\textwidth][c]{\textbf{(a)}}%
    \hfill
    \makebox[0.4\textwidth][c]{\textbf{(b)}}%

    \caption{(a) The bipartite entanglement entropy versus time, plotted together with the average battery energy, (b) the rate of entanglement growth versus time, plotted together with power.}
    \label{fig:entropy}
\end{figure*}
Previous works \cite{Bravyi2007EntanglingRates, VanAcoleyen2013AreaLaws} also suggests that the rate of bipartite entanglement growth depends on the strength of the interaction terms, crucially, consider a $k$-local, $g$-extensive Hamiltonian 
\begin{equation}
H=\sum_X h_X ,\qquad 
\forall i:\ \sum_{X\ni i}\|h_X\|\le g .
\end{equation}
Let $i^\star$ denote a site with maximal extensivity $g_{i^\star}=g$.  
We construct a bipartition $A|B$ by placing $i^\star\in A$ and the remaining sites in B.  
Thus the terms that cross the bipartition satisfy
\begin{equation}
\|H_{AB}\|
\;=\;
\Big\|\sum_{\substack{X\ni i^\star\\ X\ \text{crosses}\ A|B}} h_X\Big\|
\;\le\; g.
\end{equation}

For qudits of local dimension $q$,
\begin{equation}
d=\min(|A|,|B|)=q,
\qquad 
\log d= \log q .
\end{equation}
Let $\Gamma(\Psi,H)$ denote the rate of bipartite entanglement growth when the system is initially started in a pure state $\ket{\Psi}$ of the composite system $AB$. The Small Incremental Entangling (SIE) theorem \cite{Bravyi2007EntanglingRates} gives
\begin{equation}
\Gamma(\Psi,H)
\;\le\;
c\,\|H_{AB}\|\log d
\;\le\;
c g \left(\log q\right)
=
C\, g  ,
\end{equation}
where $C=c\log q=O(1)$ is an absolute constant.
Hence, for any $g$-extensive Hamiltonian and any bipartition isolating the maximally extensive site, the instantaneous entangling rate is bounded by
\begin{equation}
\Gamma(\Psi,H)\le C\, g .
\end{equation}
demonstrating the role of $g-$extensivity in bounding the rate of entanglement growth, which consequently may produce better power.

\end{document}